\title{Small-Mass Asymptotics of Massive Point Vortex Dynamics in Bose--Einstein Condensates~I: Averaging and Normal Forms}
\author[1]{Tomoki Ohsawa}
\author[2]{Andrea Richaud} 
\author[3]{Roy Goodman}
\affil[1]{Department of Mathematical Sciences, The University of Texas at Dallas, 800 W Campbell Rd, Richardson, TX 75080-3021}
\affil[2]{Departament de Física, Universitat Politècnica de Catalunya, Campus Nord B4-B5, E-08034 Barcelona, Spain}
\affil[3]{Department of Mathematical Sciences, New Jersey Institute of Technology, University Heights, Newark, NJ 07102}
\date{\today}
\theoremstyle{plain}
\newtheorem{theorem}{Theorem}
\newtheorem{corollary}[theorem]{Corollary}
\newtheorem{proposition}[theorem]{Proposition}
\newtheorem*{assumptions*}{Assumptions}
\theoremstyle{remark}
\newtheorem{remark}{Remark}
\newcommand{\od}[2]{\frac{d#1}{d#2}}
\newcommand{\pd}[2]{\frac{\partial #1}{\partial #2}}
\newcommand{\tpd}[2]{\partial #1/\partial #2}
\newcommand{\parentheses}[1]{\!\left(#1\right)}
\newcommand{\braces}[1]{\!\left\{#1\right\}}
\newcommand{\evalat}[2]{{\left.#1\right\rvert}_{#2}}
\newcommand{\Span}{\operatorname{span}} 
\newcommand{\diag}{\operatorname{diag}}
\newcommand{\norm}[1]{\left\|#1\right\|}
\newcommand{\abs}[1]{\left|#1\right|}
\newcommand{\DS}{\displaystyle}
\newcommand{\R}{\mathbb{R}}
\newcommand{\Z}{\mathbb{Z}}
\newcommand{\N}{\mathbb{N}}
\newcommand{\defeq}{\mathrel{\mathop:}=}
\newcommand{\eqdef}{=\mathrel{\mathop:}}
\newcommand{\ip}[2]{\left\langle#1,#2\right\rangle}
\newcommand{\Ran}{\operatorname{Ran}}
\newcommand{\Null}{\operatorname{Null}}
\newcommand{\Res}{\operatorname{Res}}
\newcommand{\Non}{\operatorname{Non}}
\newcommand{\eps}{\epsilon}
\newcommand{\rmi}{{\rm i}}
\newcommand{\cc}{\mathrm{c.c.}}
\renewcommand{\d}{\mathbf{d}}
\newcommand{\ins}[1]{{\bf i}_{#1}}
\newcommand{\PB}[2]{\left\{#1,#2\right\}}
\newcommand{\ad}{\operatorname{ad}}
\newcommand{\SO}{\mathsf{SO}}
\newcommand{\br}{\mathbf{r}}
\newcommand{\bp}{\mathbf{p}}
\newcommand{\bq}{\mathbf{q}}
\newcommand{\bv}{\mathbf{v}}
\newcommand{\bz}{\mathbf{z}}
\newcommand{\bw}{\mathbf{w}}
\newcommand{\bZ}{\mathbf{Z}}
\newcommand{\bW}{\mathbf{W}}
\newcommand{\hatbz}{\hat{\mathbf{z}}}
\newcommand{\bR}{\mathbf{R}}
\newcommand{\bP}{\mathbf{P}}
\newcommand{\cZ}{\mathcal{Z}}
\newcommand{\cW}{\mathcal{W}}
\newcommand{\cJ}{\mathcal{J}}
\newcommand{\cK}{\mathcal{K}}
\newcommand{\cR}{\mathcal{R}}
\newcommand{\cS}{\mathcal{S}}
\newcommand{\balpha}{\boldsymbol{\alpha}}
\newcommand{\bbeta}{\boldsymbol{\beta}}
\begin{document}

\maketitle

\begin{abstract}
 We perform an asymptotic analysis of massive point-vortex dynamics in Bose--Einstein condensates in the small-mass limit $\varepsilon \to 0$. We define two distinguished manifolds in the phase space of the dynamics. We call the first the kinematic subspace $\mathcal{K}$, whereas the second is an almost-invariant set $\mathcal{S}$ called a ``slow manifold.'' The orthogonal projection of the massive dynamics to $\mathcal{K}$ yields the standard massless vortex dynamics or the Kirchhoff equations---also the 0th-order approximation to the massive equation as $\varepsilon \to 0$. Our first main result proves that the massive dynamics starting $O(\varepsilon)$-close to $\mathcal{K}$ remains $O(\varepsilon)$-close to the massless dynamics for short times. The second main result is the derivation of a normal form for the system's Hamiltonian for the two-vortex case; it describes the coupling between motion within $\mathcal{S}$ and that transverse to it. Specifically, we use the Lie transformation perturbation method to derive the first few terms in a formal expansion for $\mathcal{S}$ and demonstrate numerically that fast oscillations due to the vortices' mass are suppressed, given initial conditions sufficiently close to $\mathcal{S}$.
\end{abstract}

\section{Introduction}
The standard description of vortex motion in superfluids has conventionally treated vortices as massless topological defects, leading to the familiar point-vortex (or Kirchhoff) equations~\cite{Lin1941, Lamb1945}.
However, in many real experimental settings, the cores of quantum vortices in superfluids are actually filled, either intentionally or unintentionally, by massive particles, so that the vortices acquire a small but nonzero effective inertial mass.
Examples include tracer atoms introduced in superfluid liquid helium to visualize vortex lines~\cite{Bewley2006, Griffin2020, Peretti2023, Tang2023, Minowa2025}, quasiparticle bound states in both fermionic~\cite{Simanek1995, Kopnin1998, Simula2018, Kwon2021, Barresi2023, Richaud2025Fermi, Grani2025} and bosonic~\cite{Simula2018} superfluids even at zero temperature.
Moreover, experimental conditions inherently involve non-zero temperatures, leading to the presence of thermal atoms localized in vortex cores, both in atomic Bose--Einstein condensates (BECs)~\cite{Griffin2009} and in Fermi superfluids (see Refs.~\cite{Barresi2023, Richaud2025Fermi, Kwon2021, Hernandez2024} and references therein).
Additionally, in two-component BECs, atoms of one species can become localized within the vortex cores of the other species~\cite{Anderson2000, Law2010, PhysRevA.101.013630, PhysRevA.103.023311, Patrick2023, Doran2024, Bellettini2024PRA, Bellettini2024PRR, Kanjo2024,Dambroise2025}.
These examples highlight that neglecting vortex mass can overlook crucial properties of the dynamics, motivating a shift from the purely massless approximation to massive point-vortex models~\cite{PhysRevA.101.013630, PhysRevA.103.023311, Richaud2022rk,Caldara2023, Kanjo2024,Caldara2024, Khalifa2024,Dambroise2025}, which are essential for describing realistic quantum fluids.

In fact, even if the vortex mass is relatively small compared to that of the whole superfluid, it can significantly alter the dynamics of vortices by introducing non-trivial inertial effects.
As a result, one observes oscillatory phenomena~\cite{PhysRevA.103.023311, Richaud2022rk, Kanjo2024, Khalifa2024,Dambroise2025} and collisions~\cite{Richaud2023Collisions} absent within the purely massless descriptions.
These issues are also central in systems involving multicomponent~\cite{Baroni2024, Trabucco2025} or strongly coupled superfluids, where vortex-bound structures can display unusual stability or instability properties~\cite{An2025, An2025_2, Lan2023, Yao2023}.
In fermionic superfluids~\cite{Simanek1995, Kopnin1998, He2023, Magierski2024}, for instance, inertial corrections are deeply connected to the presence of the so-called standard component localized at the vortex core (see the recent Refs.~\cite{Richaud2025Fermi, Levrouw2025} and references therein).
Other examples include vortex–bright-soliton complexes emerging in two-component condensates~\cite{Law2010, Ruban2022, Ruban2022JETP, An2025, Wang2022, Katsimiga2023}.
Moreover, the influence of vortex mass is expected to be relevant in BECs subject to coherent couplings~\cite{Choudhury2022, Choudhury2023}, dipolar interactions~\cite{Prasad2024}, or optical-lattice modulations~\cite{Ancilotto2024}, and for polar-core spin vortices in ferromagnetic spin-1 BECs~\cite{Turner2009, Williamson2021}, underscoring its broad conceptual and experimental importance.

From a mathematical point of view, the introduction of a small core mass introduces \emph{singular perturbation} to the original Kirchhoff equations.
In particular, we examine how the small mass introduced into the system affects the dynamics relative to the massless (Kirchhoff) dynamics, using techniques from singular perturbation theory~\cite{OM1991, KeCo1996, Ne2015}.

In this paper, we perform an asymptotic analysis of the massive point-vortex system in two-component BECs, focusing on how the traditional massless description is recovered at zeroth order and on unveiling the next-order contributions that capture transient inertial features.

Using asymptotic analysis, our results clarify when and how the vortex mass becomes appreciable and how it affects vortex trajectories.
Specifically, we shall identify a subspace $\cK$ in the phase space in which the massive dynamics is approximated well by the massless dynamics; hence $\cK$ is called the \emph{kinematic subspace} here in the sense that the massive dynamics (2nd-order system on the configuration space) approximately behaves like the massless/kinematic system (1st-order system on the configuration space).
Intuitively, this implies that the massive solution starting close to $\cK$ stays close to the corresponding massless solution for a certain amount of time.
Our first main result,~\Cref{thm:massive-massless}, is a rigorous proof of this fact. 

The mass of the vortices enters the equations as a small non-dimensional parameter $\eps$. Solutions near $\cK$ typically display fast oscillations of small amplitude. Our second main result is to compute a normal-form expansion for the Hamiltonian of systems of two massive vortices. This applies when fast-oscillating components are small. To leading order in $\eps$, the dynamics decouples into slow and fast subsystems, the former equivalent to the massless dynamics and the latter describing the fast oscillations. Higher-order terms in the expansion couple the two subsystems. We perform this calculation using the Lie transform perturbation method (LTPM), which preserves the Hamiltonian form of the equations. The LTPM generates a canonical change of variables as a formal series in $\eps$. From this change of variables, we can define, to any order in $\eps$, a nearly invariant manifold $\cS$ called the slow manifold, on which the fast oscillations are suppressed. The kinematic subspace may be considered as a zero-order approximation to $\cS$. Such manifolds are not typically invariant; exponentially small oscillations arise spontaneously in many systems with a slow manifold. Nonetheless, we show numerically that by choosing initial conditions close to $\cS$ we can suppress the fast oscillations on timescales of interest.

\section{Massive Point Vortex Dynamics}
\subsection{Lagrangian for $N$ Massive Vortices in Two-Component BEC}
We consider a two-component Bose--Einstein Condensate in the immiscible regime, confined in a quasi-two-dimensional disk trap of radius $R$, with hard-wall boundary conditions. The majority component, labeled ``$a$'', hosts $N$ quantized vortices of topological charges $\{ q_{j} \in \{\pm1\} \}_{j=1}^{N}$ while the minority component, labeled ``$b$'', occupies the vortex cores. The total masses of the two components are given by $M_a\defeq N_a m_a$ and $M_b \defeq N_b m_b$, where $N_a$ and $N_b$ denote the total number of atoms in each component, and $m_a$ and $m_b$ are their respective atomic masses. In the absence of component $b$, the vortices in component $a$ follow the standard first-order dynamics characteristic of massless point vortices or the Kirchhoff equations~\cite[Ch. 7]{Lamb1945}. However, in the immiscible regime, the minority component $b$ becomes localized within the vortex cores, effectively endowing the vortices with mass. Assuming that the total mass $M_b$ is evenly distributed among the $N$ vortices, each vortex acquires an effective mass of $M_b/N$~\cite{PhysRevA.101.013630, PhysRevA.103.023311}.

Let $\br_{j} = (x_{j}, y_{j})$ be the position of the $j$-th vortex, and $r_{j} \defeq |\br_{j}| = (\br_{j} \cdot \br_{j})^{1/2}$ be its length.
We shall also use $\br \defeq (\br_{1}, \dots, \br_{N})$ for short and similarly $\dot{\br} \defeq (\dot{\br}_{1}, \dots, \dot{\br}_{N})$, $\bq\defeq (q_{1}, \dots, q_{N})$, etc.
According to Richaud et al.~\cite{PhysRevA.101.013630,PhysRevA.103.023311}, the Lagrangian for the $N$ massive vortices reads
\begin{equation}
    \label{eq:L_dimensionful}
    L(\br,\dot{\br}) \defeq \sum_{j=1}^N\left( \frac{M_b}{2 N}\dot{\br}_{j}^{2} +\pi n_a \hbar q_j (\dot{\br}_{j} \times \br_{j}) \cdot \hatbz \right) - E(\br),
\end{equation}
where $n_a$ is the two-dimensional number density, the cross product $\mathbf{a} \times \mathbf{b}$ of two planar vectors $\mathbf{a}, \mathbf{b} \in \R^{2}$ are taken by attaching zero as the third components to both, and is seen as a vector in $\R^{2}$ or $\R^{3}$ depending on the context, $\hatbz \defeq (0,0,1)$, and $E$ is the energy of the standard (massless) vortices
\begin{equation}
    E(\br) \defeq \frac{\pi n_a \hbar^2}{m_a} \sum_{j=1}^N  \ln \left(1-\frac{r_j^2}{R^2}\right) +\frac{\pi n_a \hbar^2 }{m_a}\sum_{1\le j<k\le N}q_j q_k \ln\left(\frac{R^2 -2\br_{j}\cdot\br_{k} +r_j^2r_k^2/R^2 }{r_j^2-2\br_{j}\cdot\br_{k}+\br_{k}^2}\right).
\end{equation}

It is useful to express Lagrangian~\eqref{eq:L_dimensionful} in rescaled non-dimensional variables. We set the disk radius $R$ as the unit of length, $m_aR^2/\hbar$ as the unit of time, and $\pi n_a \hbar^2/m_a$ as the unit of energy. In these natural units, which we adopt henceforth, the Lagrangian takes the form: 
\begin{equation}
  \label{eq:L}
  L(\br,\dot{\br}) \defeq \sum_{j=1}^{N} \parentheses{
    \frac{\eps}{2} \dot{\br}_{j}^{2}
    + q_{j} (\dot{\br}_{j} \times \br_{j}) \cdot \hatbz
  }
  - E(\br),
\end{equation}
where the only effective parameter $\eps$ is defined as
\begin{equation*}
  \eps \defeq \frac{M_{b}/M_{a}}{N},
\end{equation*}
and the potential-energy term reads
\begin{equation}
  \label{eq:E}
  E(\br)
  \defeq \sum_{j=1}^{N} \ln(1 - r_{j}^{2})
  + \sum_{1\le j < k \le N} q_{j} q_{k} \ln\parentheses{ \frac{1 - 2 \br_{j} \cdot \br_{k} + r_{j}^{2} r_{k}^{2}}{|\br_{j} - \br_{k}|^{2}} }.
\end{equation}

\begin{remark}
  Due to the assumed confining potential, strictly speaking, one must restrict the positions of the vortices to the open unit disk, i.e.,
  \begin{equation*}
    \br_{j} \in \mathbb{B}_{1}(0) \defeq \Set{ \mathbf{x} \in \R^{2} |\, |\mathbf{x}| < 1 }
    \quad\text{for}\quad
    1 \le j \le N,
  \end{equation*}
  and also have to avoid collision points---those points with $\br_{i} = \br_{j}$ with $i \neq j$---so that the above energy function $E$ is defined.
  However, in what follows, we shall ignore these issues for simplicity unless otherwise stated, because the geometric consideration to follow is simpler to describe by assuming $\br \in \R^{2N}$.
  Alternatively, one may extend the definition of $E$ to the entire $\R^{2N}$ by assigning values to $E$ outside $( \mathbb{B}_{1}(0) )^{N}$ and collision points.
  Either way, one may consider the dynamics only in the subset of $\R^{2N}$ in which the original $E$ makes sense.
\end{remark}
 
We are mainly interested in the asymptotic behavior of the massive point vortices in the small-mass limit, i.e., $0 < \eps \ll 1$.
This regime is particularly relevant from a physical perspective, as quantum vortices in real superfluid systems are rarely, if ever, truly massless and often have small masses, as explained in the Introduction.

\subsection{Hamiltonian Formulation}
Using the Lagrangian~\eqref{eq:L}, the Legendre transformation is defined via the momenta $\bp \defeq (\bp_{1}, \dots, \bp_{N})$ with
\begin{equation}
  \label{eq:p}
  \bp_{j} \defeq \pd{L}{\dot{\br}_{j}}
  = \eps\,\dot{\br}_{j} + q_{j}(\br_{j} \times \hatbz),
\end{equation}
giving
\begin{equation*}
  \dot{\br}_{j} = \frac{1}{\eps} \parentheses{ \bp_{j} - q_{j}(\br_{j} \times \hatbz) }.
\end{equation*}
Hence, we may define the Hamiltonian
\begin{align}
  \label{eq:H}
  H(\br,\bp)
  &\defeq \sum_{j=1}^{N} \bp_{j} \cdot \dot{\br}_{j} - L(\br,\dot{\br}) \nonumber\\
  &= \frac{1}{2\eps} \sum_{j=1}^{N} \parentheses{ \bp_{j} - q_{j}(\br_{j} \times \hatbz) }^{2} + E(\br) \nonumber\\
  &= \frac{1}{2\eps} \sum_{j=1}^{N} \parentheses{ p_{j}^{2} - 2q_{j}(\br_{j} \times \hatbz) \cdot \bp_{j} + q_{j}^{2} r_{j}^{2} } + E(\br).
\end{align}
Then Hamilton's equations
\begin{equation*}
  \dot{\br}_{j} = \pd{H}{\bp_{j}},
  \qquad
  \dot{\bp}_{j} = -\pd{H}{\br_{j}}
\end{equation*}
yield
\begin{equation*}
  \eps\,\dot{\br}_{j} = \bp_{j} - q_{j}(\br_{j} \times \hatbz),
  \qquad
  \eps\,\dot{\bp}_{j} = q_{j}(\hatbz \times \bp_{j}) - q_{j}^{2} \br_{j} - \eps\,\nabla_{j}E(\br),
\end{equation*}
where $\nabla_{j}$ stands for $\tpd{}{\br_{j}}$, i.e., the gradient with respect to $\br_{j}$.

For the sake of brevity, let us define
\begin{equation}
  \label{eq:J}
  J \defeq
  \begin{bmatrix}
    0 & 1 \\
    -1 & 0
  \end{bmatrix}
  \text{ so that }
  J \mathbf{a} = \mathbf{a} \times \hatbz
  \quad
  \forall \mathbf{a} \in \R^{2}.
\end{equation}
Then we have
\begin{equation}
  \label{eq:Hamilton}
  \eps\,\dot{\br}_{j} = \bp_{j} - q_{j} J \br_{j},
  \qquad
  \eps\,\dot{\bp}_{j} = -q_{j} J (\bp_{j} - q_{j} J \br_{j}) - \eps\,\nabla_{j}E(\br).
\end{equation}
One may eliminate $\bp_{j}$ from above to obtain the following second-order differential equations
\begin{equation}
  \label{eq:Euler-Lagrange}
  \eps\,\ddot{\br}_{j} + 2 q_{j} J \dot{\br}_{j} = - \nabla_{j}E(\br),
\end{equation}
which are nothing but the Euler--Lagrange equations for the Lagrangian $L$ in~\eqref{eq:L}.
Clearly, these systems are singularly perturbed systems when $\eps \ll 1$.

The Hamiltonian formulation is beneficial in highlighting the drastic effect of introducing a core mass.
In the massless case, in fact, $x_j$ is canonically conjugate to $q_j\,y_j$, i.e., the massless $N$-vortex system is a Hamiltonian system with $N$ degrees of freedom (associated to a $2N$-dimensional phase space).
However, in the massive case, the system gains $2N$ independent momenta, doubling the dimension of the associated phase space to $4N$.
Despite this increase, the number of conserved quantities, typically arising from the symmetry properties of the superfluid domain, remains unchanged.
As a result, the introduction of a core mass can break integrability.
A clear example is the two-vortex system in a disk.
In the massless limit, there are only two degrees of freedom, matching the number of conserved quantities (energy and angular momentum) and therefore ensuring integrability~\cite{Boffetta1996}.
In contrast, the massive case has four degrees of freedom while the number of conserved quantities remains two.
Hence, the system may be chaotic unless hidden invariants exist.
This distinction is particularly relevant for long-time dynamics, where slight differences in initial conditions, such as identical positions but slightly different velocities, may be exponentially amplified, leading to drastically different final states.

\section{Geometry of the Kinematic Approximation}
\subsection{The Kinematic Approximation and the Kirchhoff Equation}
Taking the limit $\eps \to 0$ in~\eqref{eq:Euler-Lagrange}, one obtains
\begin{equation}
  \label{eq:Kirchhoff}
  2q_{j} J \dot{\br}_{j} = -\nabla_{j}E(\br)
  \iff
  \left\{
    \begin{array}{l}
      \DS 2q_{j} \dot{x}_{j} = \pd{E}{y_{j}}, \medskip\\
      \DS 2q_{j} \dot{y}_{j} = -\pd{E}{x_{j}}.
    \end{array}
  \right.
\end{equation}
These are nothing but the Kirchhoff equations for massless vortices, and give the natural kinematic or massless approximation to the massive point vortex equations~\eqref{eq:Euler-Lagrange}.

According to these equations, each vortex moves with the local superfluid velocity at its position, determined by the superposition of all velocity fields induced by all the other vortices in the system. To be more specific, the velocity field generated at a point $\br$ by the $j$-th vortex, located at $\br_j$ and having topological charge $q_j$, is $ \bv(\br) = q_j \hatbz \times (\br - \br_j)/|\br - \br_j|^2$.
Hence, each vortex moves with the velocity obtained by summing the contributions from all other vortices (except itself), be they physical vortices or image vortices ensuing from the presence of boundaries. This framework is well established in classical hydrodynamics, where Euler’s equations describe the evolution of an inviscid and incompressible fluid~\cite{Lamb1945, Boffetta1996}.

\subsection{Kinematic Subspace $\cK$}
Here, we aim to give a geometric interpretation of the above kinematic approximation in the limit $\eps \to 0$.
First notice that taking the limit $\eps \to 0$ in~\eqref{eq:Hamilton} gives
\begin{equation}
  \label{eq:kinematic_constraint}
  \bp_{j} = q_{j} J \br_{j}
  \iff
  \begin{bmatrix}
    \xi_{j} \\
    \eta_{j}
  \end{bmatrix}
  = q_{j}
  \begin{bmatrix}
    y_{j} \\
    -x_{j}
  \end{bmatrix}
\end{equation}
where we wrote $\bp_{j} = (\xi_{j}, \eta_{j})$.
One may interpret the above constraints as the natural kinematic constraints in the massless limit $\eps \to 0$ in the definition~\eqref{eq:p} of the momentum $\bp$ (or the Legendre transformation).

Defining the cotangent bundle---the phase space for the Hamiltonian system~\eqref{eq:Hamilton}---as
\begin{equation*}
  T^{*}\R^{2N} = \Set{ (\br,\bp) \in \R^{2N} \times \R^{2N} | \bp \in T_{\br}^{*}\R^{2N} \cong \R^{2N} } \cong \R^{4N},
\end{equation*}
the above kinematic constraints~\eqref{eq:kinematic_constraint} give rise to the following \textit{kinematic subspace}:
\begin{equation*}
  \cK \defeq \Set{ (\br,\bp) \in T^{*}\R^{2N} | \bp_{j} = q_{j} J \br_{j} \text{ for } 1\le j \le N  }.
\end{equation*}
Clearly, $\cK$ is an $2N$-dimensional subspace of $T^{*}\R^{2N} \cong \R^{4N}$.
One also sees that $\br = (\br_{1}, \dots \br_{N}) \in \R^{2N}$ gives coordinates for $\cK$ via the following map (see~\Cref{fig:kinematic_subspace}):
\begin{equation}
  \label{eq:varphi}
  \varphi\colon \R^{2N} \to \cK;
  \qquad
  \br = (\br_{1}, \dots \br_{N})
  \mapsto
  \varphi(\br) \defeq (\br_{1}, \dots, \br_{N}, q_{1} J \br_{1}, \dots, q_{N} J \br_{N}),
\end{equation}
and hence also giving the identification
\begin{equation*}
  \cK = \varphi(\R^{2N}) \cong \R^{2N}.
\end{equation*}

\begin{figure}[htbp]
  \centering
  \includegraphics[width=.7\linewidth]{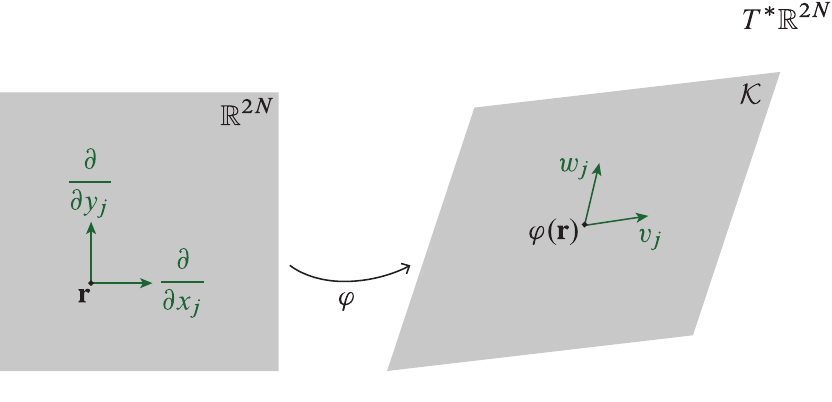}
  \caption{
    Subspace $\cK$ and tangent vectors $v_{j}, w_{j} \in T_{\varphi(\br)}\cK$ forming a basis $\{ v_{j}, w_{j} \}_{j=1}^{N}$ for $T_{\varphi(\br)}\cK$.
    Note that, strictly speaking, the tangent vectors in the figure live in tangent spaces of $\R^{2N}$ and $\cK$ but are drawn in the base spaces for simplicity.
  }
  \label{fig:kinematic_subspace}
\end{figure}

The Jacobian of $\varphi$ is given by
\begin{equation*}
  D\varphi(\br) =
  \begin{bmatrix}
    I & 0 \\
    0 & I \\
    0 & \mathsf{D}_{q} \\
    -\mathsf{D}_{q} & 0
  \end{bmatrix}
  \quad
  \text{with}
  \quad
  \mathsf{D}_{q} \defeq \diag(q_{1}, \dots, q_{N}).
\end{equation*}
Thus the tangent vectors $\pd{}{x_{j}}$ and $\pd{}{y_{j}}$ in $T_{\br}\R^{2N}$ are mapped to the following vectors in $T_{\varphi(\br)}\cK$ by $D\varphi$ as follows:
\begin{equation}
  \label{eq:v_j-w_j}
  v_{j} \defeq D\varphi(\br) \pd{}{x_{j}} = \pd{}{x_{j}} - q_{j} \pd{}{\eta_{j}},
  \qquad
  w_{j} \defeq D\varphi(\br) \pd{}{y_{j}} = \pd{}{y_{j}} + q_{j} \pd{}{\xi_{j}},
\end{equation}
which are written as tangent vectors in $T_{\varphi(\br)}T^{*}\R^{2N} = \Span\{ \pd{}{x_{j}}, \pd{}{y_{j}}, \pd{}{\xi_{j}}, \pd{}{\eta_{j}} \}_{j=1}^{N}$; see~\Cref{fig:kinematic_subspace}.
As a result, we have the following basis for the tangent space $T_{\varphi(\br)}\cK$:
\begin{equation}
  \label{eq:basis-TmathcalK}
  \braces{
    v_{j} \defeq \pd{}{x_{j}} - q_{j} \pd{}{\eta_{j}},
    \,
    w_{j} \defeq \pd{}{y_{j}} + q_{j} \pd{}{\xi_{j}}
  }_{j=1}^{N}.
\end{equation}

\subsection{Geometric Interpretation of the Kinematic Approximation}
We shall show that the Kirchhoff equation~\eqref{eq:Kirchhoff} is a natural Hamiltonian system induced in the kinematic subspace $\cK \subset T^{*}\R^{2N}$ by the Hamiltonian system~\eqref{eq:Hamilton} for massive point vortices.
To see this, first note that the standard symplectic form on $T^{*}\R^{2N}$ is given by
\begin{equation}
  \label{eq:Omega}
  \Omega \defeq \d\br_{j} \wedge \d\bp_{j} = \d{x}_{j} \wedge \d{\xi_{j}} + \d{y}_{j} \wedge \d{\eta_{j}},
\end{equation}
where $\d$ stands for the exterior derivative, and the summation convention is assumed on $j$; its matrix representation is the $4N \times 4N$ matrix
\begin{equation*}
  \mathbb{J} \defeq
  \begin{bmatrix}
    0 & I \\
    -I & 0
  \end{bmatrix}
\end{equation*}
with $I$ being the $2N \times 2N$ identity matrix; so we have $\Omega(v,w) = v^{T} \mathbb{J} w$ with all $v, w \in T_{(\br,\bp)} T^{*}\R^{2N} \cong \R^{4N}$.

Writing vector field $X_{H}$ on $T^{*}\R^{2N}$ as
\begin{equation*}
  X_{H} = \dot{\br}_{j} \pd{}{\br_{j}} + \dot{\bp}_{j} \pd{}{\bp_{j}}
  = \dot{x}_{j} \pd{}{x_{j}} + \dot{y}_{j} \pd{}{y_{j}} + \dot{\xi}_{j} \pd{}{\xi_{j}} + \dot{\eta}_{j} \pd{}{\eta_{j}},
\end{equation*}
the Hamiltonian system~\eqref{eq:Hamilton} is then equivalent to
\begin{equation*}
  \ins{X_{H}}{\Omega} = \d{H},
\end{equation*}
where $\ins{}{}$ stands for the interior product (or insertion/contraction) of a vector field with a differential form.
More concretely, one may think of $X_{H}$ and $\nabla H$ as column vectors in $\R^{4N}$ so that the above system becomes equivalent to
\begin{equation*}
  X_{H}^{T} \mathbb{J} = (\nabla H)^{T}
  \iff
  X_{H} = \mathbb{J} \nabla H.
\end{equation*}
As a result, we have
\begin{equation}
  \label{eq:X_H}
  X_{H} = \frac{1}{\eps}(\bp_{j} - q_{j} J \br_{j}) \pd{}{\br_{j}}
  - \parentheses{ \frac{1}{\eps} q_{j} J (\bp_{j} - q_{j} J \br_{j}) + \nabla_{j}E(\br) } \pd{}{\bp_{j}}
\end{equation}

With an abuse of notation, we may change the codomain of $\varphi$ to $T^{*}\R^{2N}$ and consider
\begin{equation*}
  \varphi\colon \R^{2N} \to T^{*}\R^{2N}.
\end{equation*}
Then $\varphi$ gives an embedding of $\R^{2N}$ into $T^{*}\R^{2N}$ as the subspace $\cK = \varphi(\R^{2N})$.
Let us consider the pull-back of the symplectic form $\Omega$ to $\cK \cong \R^{2N}$ by $\varphi$:
\begin{equation*}
  \omega \defeq \varphi^{*}\Omega = 2 \sum_{j=1}^{N} q_{j}\,\d{x}_{j} \wedge \d{y_{j}},
\end{equation*}
where $\varphi^{*}$ stands for the pull-back by $\varphi$, and we did \textit{not} use the summation convention here; its matrix representation is
\begin{equation*}
  \mathbb{K} \defeq (D\varphi(\br))^{T} \mathbb{J}\, D\varphi(\br)
  = 2
  \begin{bmatrix}
    0 & \mathsf{D}_{q} \\
    -\mathsf{D}_{q} & 0
  \end{bmatrix}.
\end{equation*}
Similarly, the pull-back of the Hamiltonian $H$ to $\cK \cong \R^{2N}$ by $\varphi$ gives
\begin{equation*}
  (\varphi^{*}H)(\br) = H \circ \varphi(\br) = E(\br),
\end{equation*}
where the last equality follows by using the expression~\eqref{eq:H} and the kinematic constraints~\eqref{eq:kinematic_constraint} defining $\cK$.

Writing vector field $X_{E}$ on $\cK \cong \R^{2N}$ as
\begin{equation*}
  X_{E} = \dot{\br}_{j} \pd{}{\br_{j}}
  = \dot{x}_{j} \pd{}{x_{j}} + \dot{y}_{j} \pd{}{y_{j}},
\end{equation*}
we have a Hamiltonian system defined on $\R^{2N}$ as
\begin{equation*}
  \ins{X_{E}}{\omega} = \d{E}
\end{equation*}
in terms of the symplectic form $\omega$ and Hamiltonian $E$ naturally induced on $\R^{2N}$ as the pull-backs by $\varphi$ of $\Omega$ and $H$; see~\Cref{fig:projection}.
Seeing $X_{E}$ and $\nabla E$ as column vectors in $\R^{2N}$, the above system is equivalent to
\begin{equation*}
  X_{E}^{T}\, \mathbb{K} = (\nabla{E})^{T}
  \iff
  X_{E} = (\mathbb{K}^{T})^{-1} \nabla{E}
  = \frac{1}{2}
  \begin{bmatrix}
    0 & \mathsf{D}_{q} \\
    -\mathsf{D}_{q} & 0
  \end{bmatrix}
  \nabla E,
\end{equation*}
noting that $q_{j}^{-1} = q_{j}$ because $q_{j} = \pm1$; as a result, we have
\begin{equation}
  \label{eq:X_E}
  X_{E}(\br) = \frac{q_{j}}{2} \pd{E}{y_{j}} \pd{}{x_{j}} - \frac{q_{j}}{2} \pd{E}{x_{j}} \pd{}{y_{j}},
\end{equation}
which is the vector field defined by the Kirchhoff equation~\eqref{eq:Kirchhoff}.

\begin{figure}[htbp]
  \centering
  \includegraphics[width=.7\linewidth]{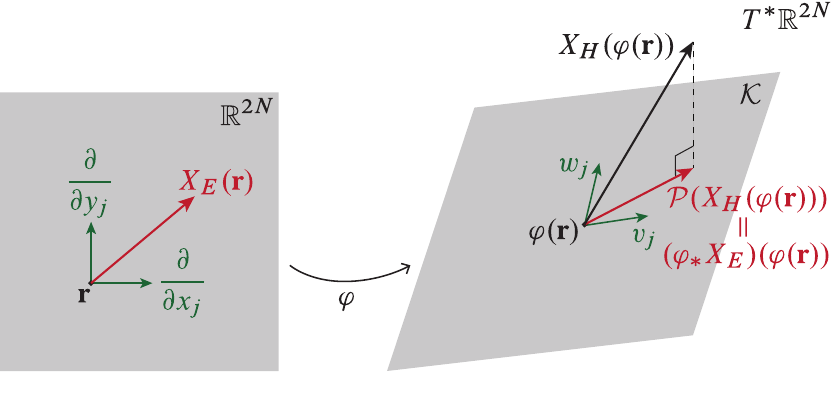}
  \caption{
    Orthogonally projecting $X_{H}$---the vector field defining the massive point vortex equation~\eqref{eq:Hamilton}---from the tangent space of $T^{*}\R^{2N}$ to the tangent space of $\cK$ defines vector field $\mathcal{P}(X_{H})$ on $\cK$; this in turn corresponds via $\varphi$ to vector field $X_{E}$ on $\R^{2N}$ defining the Kirchhoff equation~\eqref{eq:Kirchhoff}.
  }
  \label{fig:projection}
\end{figure}

\subsection{Best Approximation Property}
We shall show that the vector field $X_{E}$ in $\R^{2N}$---more precisely its push-forward $\varphi_{*}X_{E}$ on $\cK$---is the best approximation to $X_{H}$ in $\cK$ in the following sense (see also~\Cref{fig:projection}):
\begin{proposition}
  Let $g$ be the standard metric on $T^{*}\R^{2N}$, i.e.,
  \begin{equation}
    \label{eq:g}
    g = \d{x_{j}} \otimes \d{x_{j}} + \d{y_{j}} \otimes \d{y_{j}} + \d{\xi_{j}} \otimes \d{\xi_{j}} + \d{\eta_{j}} \otimes \d{\eta_{j}},
  \end{equation}
  and consider the Hamiltonian vector field $X_{H}$ on $T^{*}\R^{2N}$ (from~\eqref{eq:X_H}) for the massive vortex dynamics~\eqref{eq:Hamilton} as well as the Hamiltonian vector field $X_{E}$ on $\R^{2N}$ (from~\eqref{eq:X_E}) for the massless vortex dynamics~\eqref{eq:Kirchhoff}.
  Then the orthogonal projection of $X_{H}$ to the kinematic subspace $\cK$ is the push-forward $\varphi_{*}X_{E}$ of $X_{E}$ by $\varphi\colon \R^{2N} \to \cK$ defined in~\eqref{eq:varphi}.
\end{proposition}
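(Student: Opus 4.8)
The plan is to evaluate the Hamiltonian vector field $X_{H}$ from~\eqref{eq:X_H} at points of $\cK$, project the result orthogonally onto the tangent space $T_{\varphi(\br)}\cK$ with respect to the metric $g$ in~\eqref{eq:g}, and then identify the outcome with the push-forward $\varphi_{*}X_{E}$. The decisive observation is that the singular $1/\eps$ terms in $X_{H}$ vanish identically on $\cK$: the common factor $\bp_{j} - q_{j} J \br_{j}$ that multiplies both the $\tpd{}{\br_{j}}$ and (part of) the $\tpd{}{\bp_{j}}$ components is exactly the quantity forced to zero by the defining constraint~\eqref{eq:kinematic_constraint}. Consequently, on $\cK$ the field reduces to the finite expression
\[
  X_{H}|_{\cK} = -\nabla_{j}E(\br)\,\pd{}{\bp_{j}} = -\sum_{j=1}^{N}\parentheses{ \pd{E}{x_{j}}\pd{}{\xi_{j}} + \pd{E}{y_{j}}\pd{}{\eta_{j}} },
\]
which points purely in the momentum directions.

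Next I would record that the basis $\{ v_{j}, w_{j} \}_{j=1}^{N}$ from~\eqref{eq:basis-TmathcalK} is $g$-orthogonal: pairing the coordinate vectors directly gives $\ip{v_{j}}{w_{k}} = 0$ for all $j, k$, together with $\ip{v_{j}}{v_{k}} = \ip{w_{j}}{w_{k}} = 2\delta_{jk}$, where the factor $2$ arises from $q_{j}^{2} = 1$. Because the basis is orthogonal, the orthogonal projection $\mathcal{P}$ onto $T_{\varphi(\br)}\cK$ acts termwise as
\[
  \mathcal{P}(X) = \sum_{j=1}^{N}\parentheses{ \frac{\ip{X}{v_{j}}}{2}\,v_{j} + \frac{\ip{X}{w_{j}}}{2}\,w_{j} },
\]
with the normalization $\norm{v_{j}}^{2} = \norm{w_{j}}^{2} = 2$ built in.

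I would then compute the two pairings of $X_{H}|_{\cK}$ against this basis. Since $X_{H}|_{\cK}$ has only $\tpd{}{\xi_{j}}$ and $\tpd{}{\eta_{j}}$ components, while $v_{j}$ and $w_{j}$ carry the matching entries $-q_{j}\,\tpd{}{\eta_{j}}$ and $q_{j}\,\tpd{}{\xi_{j}}$, a single term survives in each inner product, giving $\ip{X_{H}|_{\cK}}{v_{j}} = q_{j}\,\tpd{E}{y_{j}}$ and $\ip{X_{H}|_{\cK}}{w_{j}} = -q_{j}\,\tpd{E}{x_{j}}$. Substituting into the projection formula yields
\[
  \mathcal{P}(X_{H}) = \sum_{j=1}^{N}\parentheses{ \frac{q_{j}}{2}\,\pd{E}{y_{j}}\,v_{j} - \frac{q_{j}}{2}\,\pd{E}{x_{j}}\,w_{j} }.
\]
On the other hand, by the very definition~\eqref{eq:v_j-w_j} of $v_{j}$ and $w_{j}$ as $D\varphi(\tpd{}{x_{j}})$ and $D\varphi(\tpd{}{y_{j}})$, applying $\varphi_{*}$ to $X_{E}$ from~\eqref{eq:X_E} merely substitutes $\tpd{}{x_{j}} \mapsto v_{j}$ and $\tpd{}{y_{j}} \mapsto w_{j}$, producing the identical expression; this proves the claim.

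The computation is essentially mechanical, so there is no serious analytic obstacle. The one step that deserves emphasis rather than difficulty is the conceptual point that the apparent $1/\eps$ singularity of $X_{H}$ is \emph{removable} along $\cK$: the constraint~\eqref{eq:kinematic_constraint} annihilates precisely the singular factor, so the projection is well defined at finite $\eps$ with no limiting procedure in $\eps \to 0$. A secondary point to keep straight is the normalization $\norm{v_{j}}^{2} = \norm{w_{j}}^{2} = 2$ in the projection formula, which is exactly where the charges $q_{j} = \pm 1$ enter and must be tracked carefully to obtain the correct coefficients.
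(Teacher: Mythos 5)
Your proposal is correct and follows essentially the same route as the paper's own proof: evaluate $X_{H}$ on $\cK$ (where the $1/\eps$ terms vanish because $\bp_{j}-q_{j}J\br_{j}=0$), project onto the $g$-orthogonal basis $\{v_{j},w_{j}\}$ with $\norm{v_{j}}^{2}=\norm{w_{j}}^{2}=2$, and identify the result with $\varphi_{*}X_{E}$ via the correspondence $v_{j}\leftrightarrow\tpd{}{x_{j}}$, $w_{j}\leftrightarrow\tpd{}{y_{j}}$. The only cosmetic difference is that the paper pulls the projected field back to $\R^{2N}$ and compares with $X_{E}$, whereas you push $X_{E}$ forward to $\cK$; these are the same identification read in opposite directions.
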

\begin{proof}
  First note that the basis~\eqref{eq:basis-TmathcalK} for $T_{\varphi(\br)}\cK$ is an orthogonal basis with respect to the metric $g$:
  \begin{gather*}
    g(v_{i},v_{j}) = \delta_{ij} \norm{v_{j}}^{2},
    \qquad
    g(w_{i},w_{j}) = \delta_{ij} \norm{w_{j}}^{2},
    \qquad
    g(v_{i},w_{j}) = 0
  \end{gather*}
  with
  \begin{equation*}
    \norm{v_{j}} \defeq g(v_{j},v_{j})^{1/2} = (1 + q_{j}^{2})^{1/2} = \sqrt{2},
    \qquad
    \norm{w_{j}} \defeq g(w_{j},w_{j})^{1/2} = (1 + q_{j}^{2})^{1/2} = \sqrt{2},
  \end{equation*}
  because we assume that $q_{j} = \pm1$.
  We may then define the orthogonal projection
  \begin{equation*}
    \mathcal{P} \colon T_{\varphi(\br)}T^{*}\R^{2N} \to T_{\varphi(\br)}\cK;
    \qquad
    V \mapsto \frac{g(v_{j}, V)}{\norm{v_{j}}^{2}} v_{j}
    + \frac{g(w_{j}, V)}{\norm{w_{j}}^{2}} w_{j}
  \end{equation*}
  using the summation convention on $j$.
  
  Notice that $X_{H}(\varphi(\br))$, upon imposing the kinematic constraints~\eqref{eq:kinematic_constraint} to the expression~\eqref{eq:X_H}, takes the form
  \begin{equation*}
    X_{H}(\varphi(\br)) = -\pd{E}{x_{j}} \pd{}{\xi_{j}} -\pd{E}{y_{j}} \pd{}{\eta_{j}}
  \end{equation*}
  and thus, using the expressions for $v_{j}$ and $w_{j}$ from~\eqref{eq:basis-TmathcalK},
  \begin{equation*}
    g(v_{j},X_{H}(\varphi(\br)) ) = q_{j} \pd{E}{y_{j}},
    \qquad
    g(w_{j},X_{H}(\varphi(\br)) ) = -q_{j} \pd{E}{x_{j}} 
  \end{equation*}
  Hence, the orthogonal projection of $X_{H}(\varphi(\br))$ to $T_{\varphi(\br)}\cK$ is as follows:
  \begin{align*}
    \mathcal{P}(X_{H}(\varphi(\br)))
    &= \frac{g(v_{j}, X_{H}(\varphi(\br)) )}{\norm{v_{j}}^{2}} v_{j}
      + \frac{g(w_{j}, X_{H}(\varphi(\br)) )}{\norm{w_{j}}^{2}} w_{j} \\
    &= \frac{q_{j}}{2} \pd{E}{y_{j}} v_{j}
      - \frac{q_{j}}{2} \pd{E}{x_{j}} w_{j}.
  \end{align*}
  Sending this vector field from $\cK$ to $\R^{2N}$ using the correspondence $v_{j} \leftrightarrow \pd{}{x_{j}}$ and $w_{j} \leftrightarrow \pd{}{y}_{j}$ from~\eqref{eq:v_j-w_j} (i.e., the pull-back by $\varphi$), we obtain the following vector field on $\R^{2N}$:
  \begin{equation*}
    \frac{q_{j}}{2} \pd{E}{y_{j}} \pd{}{x_{j}} - \frac{q_{j}}{2} \pd{E}{x_{j}} \pd{}{y_{j}},
  \end{equation*}
  which is $X_{E}(\br)$, defining the Kirchhoff equation~\eqref{eq:Kirchhoff}:
  \begin{equation*}
    \dot{x}_{j} = \frac{q_{j}}{2} \pd{E}{y_{j}},
    \qquad
    \dot{y}_{j} = -\frac{q_{j}}{2} \pd{E}{x_{j}},
  \end{equation*}
  where we again note that $q_{j}^{-1} = q_{j}$ because $q_{j} = \pm1$.
\end{proof}

We note that the above proposition is reminiscent of the so-called variational approximation or the Dirac--Frenkel--McLachlan variational principle~\cite{Di1930, Fr1934, Mc1964, KrSa1981, Lu2005, Lu2008} that is often used in approximations of quantum dynamics.

\subsection{Numerical Results: Massless vs.~Massive near $\mathcal{K}$}
\label{ssec:numerical_result_on_K}
Consider the vortex dipole case with the following parameters and initial conditions:
\begin{equation}
  \label{eq:IC_on_K}
  \begin{array}{c}
    N = 2,\quad  q_{1} = -1,\quad q_{2} = 1,\quad \eps = 0.01, \medskip\\
    \br_{1}(0) = (x_{1}(0), y_{1}(0)) = (0.6, 0.2),\quad \br_{2}(0) = (x_{2}(0), y_{2}(0)) = (-0.3, -0.4), \medskip\\
    \bp_{1}(0) = q_{1} J \br_{1}(0) = (-0.2,0.6),\quad \bp_{2}(0) = q_{2} J \br_{2}(0) = (-0.4,0.3).
  \end{array}
\end{equation}
Notice that $\bp_{j}(0) = q_{j} J \br_{j}(0)$ so that $(\br(0),\bp(0))$ is in the kinematic subspace $\mathcal{K}$.

\Cref{fig:massless_on_K} compares the massless solution of the Kirchhoff equation~\eqref{eq:Kirchhoff} with the massive solution of~\eqref{eq:Hamilton}.
We observe that the massless and massive solutions are very close to each other, though they slowly diverge.

\begin{figure}[htbp]
  \centering
  \begin{subcaptionblock}[b]{0.475\textwidth}
    \centering
    \includegraphics[width=\textwidth]{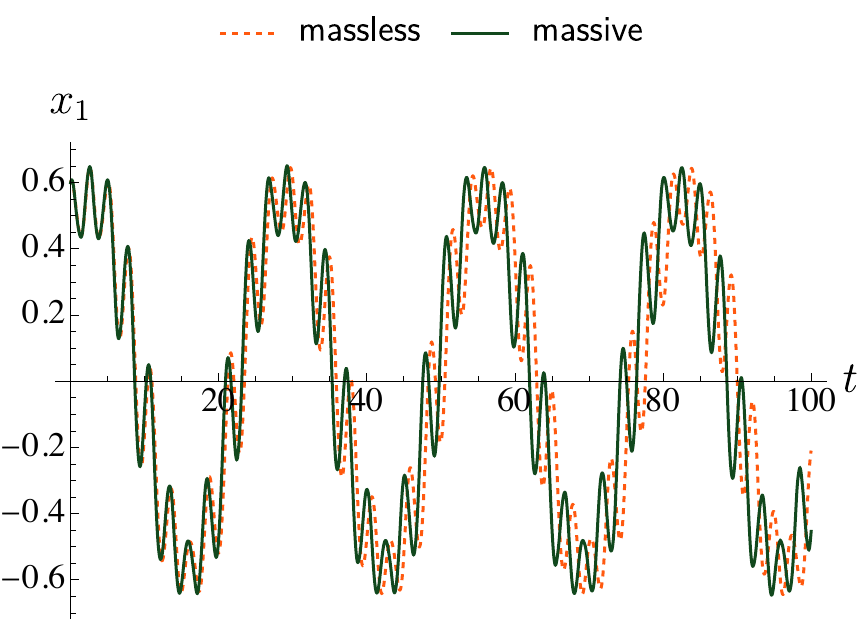}
    \caption{$x$-coordinate of vortex~1}
    \label{fig:t-x01_on_K}
  \end{subcaptionblock}
  \hfill
  \begin{subcaptionblock}[b]{0.5\textwidth}
    \centering
    \includegraphics[width=\textwidth]{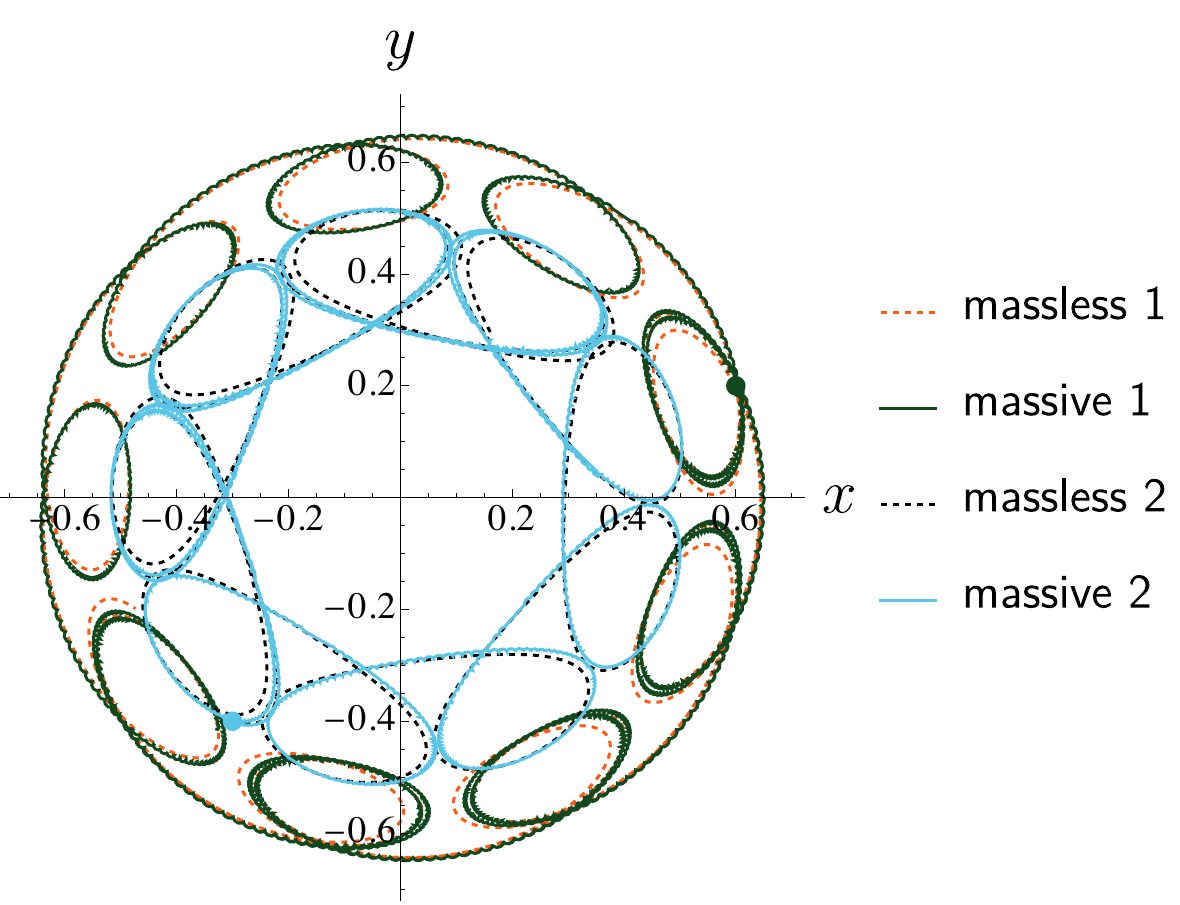}
    \caption{trajectories of vortices on $(x,y)$-plane}
  \end{subcaptionblock}
  \caption{Massless and massive numerical time series of vortex~1 with parameters and initial conditions from~\eqref{eq:IC_on_K} satisfying $(\br(0),\bp(0)) \in \mathcal{K}$.}
  \label{fig:massless_on_K}
\end{figure}

\subsection{Noether Invariants}
The Hamiltonian~\eqref{eq:H} has the $\SO(2)$-symmetry:
For every planar rotation matrix $\mathsf{R} \in \SO(2)$,
\begin{equation*}
  H\parentheses{ \mathsf{R}\br_{1}, \dots, \mathsf{R}\br_{N}, \mathsf{R}\bp_{1}, \dots, \mathsf{R}\bp_{N} } = H(\br_{1}, \dots, \br_{N}, \bp_{1}, \dots, \bp_{N}).
\end{equation*}
The corresponding Noether invariant is the angular momentum
\begin{equation*}
  \ell(\br,\bp) \defeq \sum_{j=1}^{N} (\br_{j} \times \bp_{j}) \cdot \hatbz
  = \sum_{j=1}^{N} (\br_{j} \times \hatbz) \cdot \bp_{j}
  = \sum_{j=1}^{N} (J\br_{j}) \cdot \bp_{j}
\end{equation*}
Its pull-back by $\varphi$ (see~\eqref{eq:varphi}) gives
\begin{equation}
  \label{eq:AngImpulse}
  \mathcal{I}(\br) \defeq (\varphi^{*}\ell)(\br) = \ell \circ \varphi(\br)
  = \sum_{j=1}^{N} (J\br_{j}) \cdot (q_{j} J \br_{j})
  = \sum_{j=1}^{N} q_{j}\, r_{j}^{2},
\end{equation}
and this is the Noether invariant---the so-called angular impulse---of the Kirchhoff equation~\eqref{eq:Kirchhoff} corresponding to the $\SO(2)$-symmetry.

Quantity~\eqref{eq:AngImpulse} is closely related to the third component of the expectation value of the angular momentum of the $N$-vortex superfluid system, given in dimensionful units by $L_{z,a}=\pi \hbar n_a \sum_{j=1}^{N}(R^2-q_jr_j^2)$, and in dimensionless units by $L_{z,a}=\sum_{j=1}^N(1-q_j r_j^2)$.
The angular impulse $\mathcal{I}$ from above differs from this expression only by inessential additive and multiplicative constants.
This formula can be derived by computing the expectation value $L_{z,a}=\langle\psi_a|\hat{L}_z|\psi_a \rangle$ of the angular-momentum operator $\hat{L}_z$ with respect to a suitably chosen trial wavefunction $\psi_a$ capturing the main features of the state of component-$a$ BEC or of another quantum fluid (see Ref.~\cite{Caldara2023} for further details).
We note, in this regard, that the Lagrangian~\eqref{eq:L} is defined up to a total time derivative, allowing for different choices that lead to slight variations in the form of invariant~\eqref{eq:AngImpulse}.
However, among these possible variations, differing only by additive and/or multiplicative constants, only one corresponds directly to the physically meaningful angular momentum of the system. 

\begin{figure}[htbp]
  \centering
  \includegraphics[width=.55\textwidth]{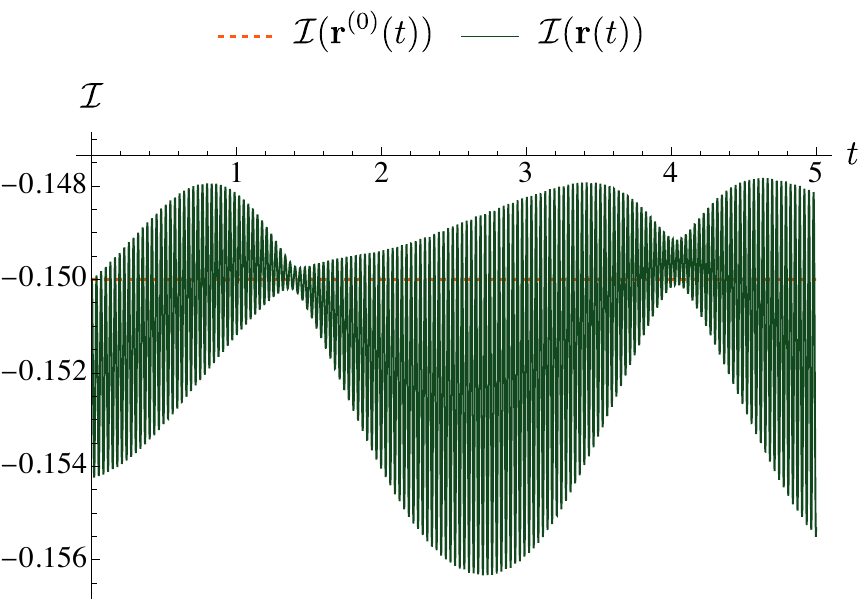}
  \caption{
    Time evolution of  angular impulse $\mathcal{I}$ (see~\eqref{eq:AngImpulse}) along  massless solution $\br^{(0)}(t)$ and massive solution $\br(t)$ with parameters and initial conditions from~\eqref{eq:IC_on_K} satisfying $(\br(0),\bp(0)) \in \mathcal{K}$.
  }
  \label{fig:AngImpulse_on_K}
\end{figure}

\Cref{fig:AngImpulse_on_K} shows the time evolution of the angular impulse $\mathcal{I}$ from~\eqref{eq:AngImpulse} along both the massless solution $\br^{(0)}(t)$ and the massive solution $\br(t)$ with parameters and initial conditions from~\eqref{eq:IC_on_K} satisfying $(\br(0),\bp(0)) \in \mathcal{K}$.
Although $\mathcal{I}$ is \textit{not} an invariant of the massive equations~\eqref{eq:Hamilton}, we see that $\mathcal{I}(\br(t))$ is nearly conserved, oscillating near the actual conserved quantity~$\mathcal{I}(\br^{(0)}(t))$.
This result suggests that the invariant $\mathcal{I}$ of the massless system nearly persists for the massive system when the massive solution stays close to $\mathcal{K}$ when $\eps \ll 1$. Observe that this figure is plotted over a short time interval compared with~\cref{fig:massless_on_K}; the impulse varies on two time scales, with oscillations of similar amplitude on both.

\section{Averaging and Approximation by Massless Solutions}
In this section, we relate the massive system to the massless system via the method of averaging.
Specifically, we consider the periodic averaging of the massive system~\eqref{eq:Hamilton} and show, in~\Cref{prop:averaging} and~\Cref{cor:averaging}, that the averaged system is \textit{approximately} the massless system~\eqref{eq:Kirchhoff}.
This leads to our first main result,~\Cref{thm:massive-massless}, that the massive system starting $O(\eps)$-close to $\mathcal{K}$ is approximated by the corresponding massless system with $O(\eps)$ error.

\subsection{Massive System in Fast Time}
\label{ssec:massive-fast_time}
To facilitate the averaging procedure to be discussed in this section, let us first rewrite the massive system~\eqref{eq:Hamilton} as
\begin{equation*}
  \eps\,\od{\bz_{j}}{t} = A_{j} \bz_{j} + \eps\,\mathbf{f}_{j}(\bz)
  \quad
  \text{for}
  \quad
  1 \le j \le N,
\end{equation*}
where we defined
\begin{equation}
  \label{eq:z_A_f}
  \bz_{j} \defeq
  \begin{bmatrix}
    \br_{j} \\
    \bp_{j}
  \end{bmatrix},
  \qquad
  A_{j} \defeq
  \begin{bmatrix}
    -q_{j} J & I \\
    -I & -q_{j} J
  \end{bmatrix},
  \text{ and }
  \mathbf{f}_{j}(\bz) \defeq
  \begin{bmatrix}
    0 \\
    -\nabla_{j}E(\br)
  \end{bmatrix}.
\end{equation}
Furthermore, setting
\begin{equation*}
  \bz \defeq
  \begin{bmatrix}
    \bz_{1} \\
    \vdots \\
    \bz_{N}
  \end{bmatrix},
  \qquad
  A \defeq \begin{bmatrix}
    A_{1} &  & 0\\
          & \ddots & \\
    0 & & A_{N}
  \end{bmatrix},
  \text{ and }
  \mathbf{f}(\bz) \defeq
  \begin{bmatrix}
    \mathbf{f}_{1}(\bz) \\
    \vdots \\
    \mathbf{f}_{N}(\bz)
  \end{bmatrix},
\end{equation*}
we have
\begin{equation}
  \label{eq:massive_vec}
  \eps\,\od{}{t}\bz(t) = A\, \bz(t) + \eps\,\mathbf{f}(\bz(t)).
\end{equation}

Using the fast time
\begin{equation*}
  T \defeq \frac{t}{\eps}, 
\end{equation*}
we have
\begin{equation*}
  \od{}{T}\bz(\eps T) = A\, \bz(\eps T) + \eps\,\mathbf{f}(\bz(\eps T))
  \iff
  \od{}{T}\parentheses{ e^{-T A}\, \bz(\eps T) } = \eps\, e^{-T A}\, \mathbf{f}(\bz(\eps T)).
\end{equation*}
Setting
\begin{equation}
  \label{eq:bw}
  \bw(T) =
  \begin{bmatrix}
    \bw_{1}(T) \\
    \vdots \\
    \bw_{N}(T)
  \end{bmatrix}
  \defeq e^{-T A}\, \bz(\eps T)
  =
  \begin{bmatrix}
    e^{-T A_{1}} \mathbf{z}_{1}(\eps T) \\
    \vdots \\
    e^{-T A_{N}} \mathbf{z}_{N}(\eps T)
  \end{bmatrix},
\end{equation}
and
\begin{equation}
  \label{eq:F}
  \mathbf{F}(\bw,T) = 
  \begin{bmatrix}
    \mathbf{F}_{1}(\bw,T) \\
    \vdots \\
    \mathbf{F}_{N}(\bw,T)
  \end{bmatrix}
  \defeq e^{-T A}\, \mathbf{f}\parentheses{ e^{T A}\,\bw }
  = \begin{bmatrix}
    e^{-T A_{1}} \mathbf{f}_{1}\parentheses{ \mathbf e^{T A}\,\bw } \\
    \vdots \\
    e^{-T A_{N}} \mathbf{f}_{N}\parentheses{ \mathbf e^{T A}\,\bw }
  \end{bmatrix},
\end{equation}
we have the following massive system in the fast time $T$:
\begin{equation}
  \label{eq:massive-fast_time}
  \od{\bw}{T} = \eps\, \mathbf{F}(\bw(T),T).
\end{equation}
Such changes of variables are called \emph{van der Pol} transformations.

\subsection{Averaging and Massless System}
Notice that $e^{-2\pi A_{j}} = I$ for every $j \in \{1, \dots, N\}$, and thus $T \mapsto \mathbf{F}(\bw,T)$ is $2\pi$-periodic:
\begin{equation*}
  \mathbf{F}(\bw, T + 2\pi) = \mathbf{F}(\bw, T).
\end{equation*}
The standard periodic averaging theory (see, e.g., \cite[Chapter~2]{SaVeMu2007}) applied to this system tells us the following:
Define the $2\pi$-average of $\mathbf{F}$
\begin{equation}
  \label{eq:overline_F}
  \overline{\mathbf{F}}(\mathbf{x})
  \defeq \frac{1}{2\pi} \int_{0}^{2\pi} \mathbf{F}(\mathbf{x}, \tau)\, d\tau
  = \frac{1}{2\pi} \int_{0}^{2\pi} e^{-\tau A}\, \mathbf{f}\parentheses{ e^{\tau A}\mathbf{x} } d\tau,
\end{equation}
and consider the averaged system
\begin{equation*}
  \od{\mathbf{x}}{T} = \eps\, \overline{\mathbf{F}}(\mathbf{x}).
\end{equation*}
Then, under certain conditions, one has $\norm{ \bw(T) - \mathbf{x}(T) } = O(\eps)$; see, e.g., \cite[Theorem~2.8.1]{SaVeMu2007}.

One might expect that the above averaged system is essentially the massless system.
However, it turns out that the massless vector field is given by the following slightly different averaged system:
\begin{equation}
  \label{eq:G}
  \mathbf{G}(\bw) \defeq \frac{1}{2\pi} \int_{0}^{2\pi} e^{-\tau A}\, \mathbf{f}(\bw)\,d\tau.
\end{equation}
To see this, let $j \in \{1, \dots, N\}$ be arbitrary and first observe that simple calculations yield
\begin{equation*}
  M_{j} \defeq
  \frac{1}{2\pi} \int_{0}^{2\pi} e^{-A_{j} \tau} d\tau = \frac{1}{2}
  \begin{bmatrix}
    I & -q_{j} J \\
    q_{j} J & I
  \end{bmatrix},
\end{equation*}
and so the $j$-th block of $\mathbf{G}$ is given by
\begin{align*}
  \mathbf{G}_{j}(\overline{\bw})
  = M_{j}\, \mathbf{f}_{j}(\overline{\bw}) 
  = \frac{1}{2}
  \begin{bmatrix}
    I & -q_{j} J \\
    q_{j} J & I
  \end{bmatrix}
   \begin{bmatrix}
    0 \\
    -\nabla_{j}E(\br)
   \end{bmatrix}
  = \frac{1}{2}
    \begin{bmatrix}
     q_{j} J\, \nabla_{j}E(\br) \\
    -\nabla_{j}E(\br)
  \end{bmatrix},
\end{align*}
where we write
\begin{equation*}
  \overline{\bw}_{j} =
  \begin{bmatrix}
    \br_{j} \smallskip\\
    q_{j} J\,\br_{j}
  \end{bmatrix}
  \quad\text{for}\quad
  1 \le j \le N.
\end{equation*}
Notice that then $\overline{\bw} \in \mathcal{K}$.
Then we have the following system evolving in $\mathcal{K}$:
\begin{equation}
  \label{eq:massless-fast_time}
  \od{}{T}\overline{\bw} = \eps\,\mathbf{G}(\overline{\bw}),
\end{equation}
which is equivalent to the massless equations~\eqref{eq:Kirchhoff}:
For every $j \in \{1, \dots, N\}$,
\begin{align}
  \label{eq:massless-equivalence}
  \od{}{T}\overline{\bw}_{j} = \eps\,\mathbf{G}_{j}(\overline{\bw})
  \iff
  \od{}{t}\overline{\bw}_{j} = \mathbf{G}_{j}(\overline{\bw})
  \iff
  \dot{\br}_{j} = \frac{q_{j}}{2} J\, \nabla_{j}E(\br).
\end{align}

\subsection{Averaged Massive System and Massless System}
\label{ssec:averaging}
Despite the (slight) difference between the periodic average $\overline{\mathbf{F}}$ of vector field $\mathbf{F}$ and the vector field $\mathbf{G}$ for the massless equations, it turns out that $\mathbf{G}$ is also effectively the periodic average of $\mathbf{F}$, assuming the following:
\begin{assumptions*}
  \noindent
  \begin{enumerate}[(i)]
  \item Solutions $T \mapsto \bw(T)$ to~\eqref{eq:massive-fast_time} are contained in some subset of $T^{*}\R^{2N}$ (more precisely $T^{*} ( \mathbb{B}_{1}(0) )^{N}$) on which $\bw \mapsto \mathbf{F}(\bw,T)$ and $\bw \mapsto \overline{\mathbf{F}}(\bw)$ are Lipschitz.
    \label{assumption-1}
    \smallskip
  \item Massless solution $t \mapsto \br(t)$ to~\eqref{eq:massless-equivalence} is contained in a closed set $\mathcal{D} \subset ( \mathbb{B}_{1}(0) )^{N}$ excluding the inter-vortex collision points $\br_{j} = \br_{k}$ for $j \neq k$ as well as vortex-wall collision points $r_{j} = 1$ for every $j$ (see the definition~\eqref{eq:E} of the potential energy $E$); particularly, this implies that $\overline{\bw} \mapsto \mathbf{G}(\overline{\bw})$ is Lipschitz.
    \label{assumption-2}
  \end{enumerate}
\end{assumptions*}

In what follows, we shall also use the Euclidean norm (2-norm) $\norm{\,\cdot\,}$ and its associated operator norm.

\begin{proposition}[Averaged massive system is approximately massless]
  \label[proposition]{prop:averaging}
  Let $T \mapsto \bw(T)$ and $T \mapsto \overline{\bw}(T)$ be the solutions of~\eqref{eq:massive-fast_time} and~\eqref{eq:massless-fast_time}, respectively, with initial conditions
  \begin{equation*}
    \bw(0) = \overline{\bw}(0) \in \mathcal{K},
  \end{equation*}
  and suppose that both solutions satisfy the above assumptions for every $T \in [0,t_{1}/\eps]$ with some constant $t_{1} > 0$.
  Then, we have
  \begin{equation*}
    \norm{ \bw(T) - \overline{\bw}(T) } \le \eps\, c_{1}\, e^{\lambda_{1} t_{1}}
    \quad
    \forall T \in [0,t_{1}/\eps]
  \end{equation*}
  for some constants $c_{1}, \lambda_{1} > 0$.
\end{proposition}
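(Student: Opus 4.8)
The plan is to reduce the statement to the classical first-order periodic-averaging estimate by first recognizing that the ``wrong'' averaged field $\mathbf{G}$ in fact coincides with the genuine period average $\overline{\mathbf{F}}$ along the massless solution. The crucial algebraic observation is that the kinematic subspace is exactly the kernel of $A$: for each block one has $A_{j}(\br_{j}, q_{j}J\br_{j})^{T}=0$, since the off-diagonal blocks give $-q_{j}J\br_{j}+q_{j}J\br_{j}=0$ and $-\br_{j}-q_{j}J(q_{j}J\br_{j})=-\br_{j}-q_{j}^{2}J^{2}\br_{j}=0$ because $q_{j}^{2}J^{2}=-I$. Hence $\cK=\Null A$, so $e^{\tau A}\mathbf{x}=\mathbf{x}$ for every $\mathbf{x}\in\cK$ and all $\tau$. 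Consequently, for $\mathbf{x}\in\cK$ the flow inside the integral defining $\overline{\mathbf{F}}$ produces no oscillation, $\mathbf{f}(e^{\tau A}\mathbf{x})=\mathbf{f}(\mathbf{x})$, and therefore $\overline{\mathbf{F}}(\mathbf{x})=\mathbf{G}(\mathbf{x})$ on $\cK$.

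Next I would verify that the massless flow keeps $\overline{\bw}(T)$ in $\cK$: differentiating the constraint $\bp_{j}=q_{j}J\br_{j}$ along~\eqref{eq:massless-fast_time} and using $q_{j}^{2}J^{2}=-I$ shows the constraint is preserved, so $\overline{\bw}(T)\in\cK$ for all $T$. Combined with the previous step, $\mathbf{G}(\overline{\bw}(T))=\overline{\mathbf{F}}(\overline{\bw}(T))$, which means $\overline{\bw}$ is in fact an \emph{exact} solution of the true averaged system $\od{\overline{\bw}}{T}=\eps\,\overline{\mathbf{F}}(\overline{\bw})$. The proposition has thereby been reduced to the standard comparison between the solution $\bw$ of~\eqref{eq:massive-fast_time} and the solution of its genuine period average, started from matching initial data in $\cK$.

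To obtain the explicit bound I would run the classical Gronwall argument directly rather than merely cite the averaging theorem, so that the constants appear. Writing $e(T)\defeq\bw(T)-\overline{\bw}(T)$ in integral form and inserting $\mathbf{G}(\overline{\bw})=\overline{\mathbf{F}}(\overline{\bw})$ gives $e(T)=\eps\int_{0}^{T}[\mathbf{F}(\bw,s)-\overline{\mathbf{F}}(\overline{\bw})]\,ds$, which I split as $\eps\int_{0}^{T}[\mathbf{F}(\bw,s)-\overline{\mathbf{F}}(\bw)]\,ds+\eps\int_{0}^{T}[\overline{\mathbf{F}}(\bw)-\overline{\mathbf{F}}(\overline{\bw})]\,ds$. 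The second term is bounded by $\eps\lambda\int_{0}^{T}\norm{e(s)}\,ds$ using the Lipschitz constant $\lambda$ of $\overline{\mathbf{F}}$ from the Assumptions. For the first, oscillatory term I would introduce the primitive $\mathbf{P}(\bw,T)\defeq\int_{0}^{T}[\mathbf{F}(\bw,s)-\overline{\mathbf{F}}(\bw)]\,ds$, which is $2\pi$-periodic in $T$ (the integrand has zero mean over a period) and hence uniformly bounded; an integration by parts in $s$, via $\pd{\mathbf{P}}{s}=\od{}{s}\mathbf{P}(\bw(s),s)-\eps\,(D_{\bw}\mathbf{P})\,\mathbf{F}(\bw(s),s)$, converts $\int_{0}^{T}[\mathbf{F}(\bw,s)-\overline{\mathbf{F}}(\bw)]\,ds$ into a boundary contribution of size $O(1)$ plus a remainder of size $\eps\,O(T)=O(t_{1})$, so that, after the outer factor $\eps$, the whole first term is $O(\eps)$ uniformly for $T\le t_{1}/\eps$. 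Collecting the estimates yields $\norm{e(T)}\le\eps\,c_{1}+\eps\lambda\int_{0}^{T}\norm{e(s)}\,ds$, and Gronwall's inequality gives $\norm{e(T)}\le\eps\,c_{1}e^{\eps\lambda T}\le\eps\,c_{1}e^{\lambda t_{1}}$, which is the asserted bound with $\lambda_{1}=\lambda$.

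The main obstacle is exactly the oscillatory first term: showing it stays $O(\eps)$, rather than growing like $\eps T=O(1)$, is the substantive content of averaging, and the integration-by-parts step formally wants $\mathbf{P}$ differentiable in $\bw$ with uniformly bounded derivative. The bare Lipschitz hypothesis of the Assumptions already suffices if one replaces this step by the classical telescoping-over-periods estimate, but I would also note that here $E$ is real-analytic on the collision-free region $\mathcal{D}$, so $\mathbf{f}$ and hence $\mathbf{F}$ are smooth and the chain-rule step is unproblematic. The only remaining care is to keep all constants uniform over $[0,t_{1}/\eps]$, which Assumptions (i)--(ii) guarantee by confining both trajectories to a fixed compact set on which $\mathbf{F}$, $\overline{\mathbf{F}}$, $\mathbf{P}$, and $D_{\bw}\mathbf{P}$ are bounded.
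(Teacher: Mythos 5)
Your proposal is correct, and its core is the same as the paper's: you identify $\cK = \Null A$, conclude $e^{\tau A}\mathbf{x} = \mathbf{x}$ on $\cK$ and hence $\mathbf{G} = \overline{\mathbf{F}}$ along the massless solution $\overline{\bw}$, and then close the argument with Gronwall. This reduction is exactly the paper's key step. Where you diverge is in the handling of the error terms. First, your splitting is the mirror image of the paper's: you write $\mathbf{F}(\bw,s)-\overline{\mathbf{F}}(\overline{\bw}) = [\mathbf{F}(\bw,s)-\overline{\mathbf{F}}(\bw)] + [\overline{\mathbf{F}}(\bw)-\overline{\mathbf{F}}(\overline{\bw})]$, so your oscillatory integral runs along the massive solution $\bw$ and your Lipschitz term uses $\overline{\mathbf{F}}$, whereas the paper uses $[\mathbf{F}(\bw,\tau)-\mathbf{F}(\overline{\bw},\tau)] + [\mathbf{F}(\overline{\bw},\tau)-\overline{\mathbf{F}}(\overline{\bw})]$, placing the oscillatory integral along $\overline{\bw}$. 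Both work, because each of $\bw$ and $\overline{\bw}$ is slowly varying ($\od{\bw}{T}=\eps\mathbf{F}$, $\od{\overline{\bw}}{T}=\eps\mathbf{G}$), and both Lipschitz constants are covered by the Assumptions. Second, and more substantively, you bound the oscillatory integral by integration by parts against the zero-mean primitive $\mathbf{P}(\bw,T)$, which requires $C^{1}$-dependence of $\mathbf{F}$ on $\bw$ with uniformly bounded derivative; the paper instead uses the telescoping-over-periods (Besjes-type) estimate of \cite[Lemma~2.8.2]{SaVeMu2007}, reproduced in its Appendix, which needs only the Lipschitz hypothesis actually stated in the Assumptions. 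You correctly flag this trade-off yourself: your route buys a cleaner one-line estimate at the cost of a smoothness hypothesis (justified here, since $E$ is real-analytic away from collisions and the boundary), while the paper's route stays within the literal assumptions. Either way the constants are uniform on $[0,t_{1}/\eps]$ and the bound $\eps\,c_{1}e^{\lambda_{1}t_{1}}$ follows.
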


\begin{proof}
  From~\eqref{eq:massive-fast_time} and~\eqref{eq:massless-fast_time} and the initial conditions, we have
  \begin{align*}
    \bw(T) - \overline{\bw}(T)
    = \eps \int_{0}^{T} ( \mathbf{F}(\bw(\tau), \tau) - \mathbf{G}(\overline{\bw}(\tau)) )\, d\tau.
  \end{align*}
  However, notice that, since $\overline{\bw}(\tau) \in \mathcal{K}$, we have
  \begin{equation*}
    A_{j} \overline{\bw}_{j}(\tau) =
    \begin{bmatrix}
      -q_{j} J & I \\
      -I & -q_{j} J
    \end{bmatrix}
    \begin{bmatrix}
      \br_{j}(\tau) \smallskip\\
      q_{j} J\,\br_{j}(\tau)
    \end{bmatrix}
    = 0
    \implies
    e^{\tau A_{j}} \overline{\bw}_{j}(\tau) = \overline{\bw}_{j}(\tau),
  \end{equation*}
  and thus $e^{\tau A}\,\overline{\bw}(\tau) = \overline{\bw}(\tau)$; hence we have
  \begin{align*}
    \mathbf{G}(\overline{\bw}(\tau))
    &= \frac{1}{2\pi} \int_{0}^{2\pi} e^{-\tau A}\, \mathbf{f}\parentheses{ \overline{\bw}(\tau) } d\tau \\
    &= \frac{1}{2\pi} \int_{0}^{2\pi} e^{-\tau A}\, \mathbf{f}\parentheses{ e^{\tau A}\,\overline{\bw}(\tau) } d\tau \\
    &= \overline{\mathbf{F}}(\overline{\bw}(\tau)).
  \end{align*}
  Therefore, we have
  \begin{align*}
    \mathbf{F}(\bw(\tau), \tau) - \mathbf{G}(\overline{\bw}(\tau))
    &= \mathbf{F}(\bw(\tau), \tau) - \overline{\mathbf{F}}(\overline{\bw}(\tau)) \\
    &= \mathbf{F}(\bw(\tau), \tau) - \mathbf{F}(\overline{\bw}(\tau), \tau) 
      + \mathbf{F}(\overline{\bw}(\tau), \tau) - \overline{\mathbf{F}}(\bw(\tau)),
  \end{align*}
  and so
  \begin{align*}
    \norm{ \bw(T) - \overline{\bw}(T) }
    &\le \eps \int_{0}^{T} \norm{ \mathbf{F}(\bw(\tau), \tau) - \mathbf{F}(\overline{\bw}(\tau),\tau) } d\tau 
      + \eps \norm{
      \int_{0}^{T} \parentheses{ \mathbf{F}(\overline{\bw}(\tau),\tau) - \overline{\mathbf{F}}(\overline{\bw}(\tau)) } d\tau
      }. 
  \end{align*}

  For the first term on the right-hand side, the assumptions imply that $\bw \mapsto \mathbf{F}(\bw,\tau)$ is Lipschitz, i.e., there exists $\lambda_{1} > 0$ such that
  \begin{equation*}
    \norm{ \mathbf{F}(\bw(\tau), \tau) - \mathbf{F}(\overline{\bw}(\tau),\tau) } \le \lambda_{1} \norm{ \bw(\tau) - \overline{\bw}(\tau) }
    \quad
    \forall\tau \in [0, T].
  \end{equation*}
  For the second term, one can show (see the Appendix) that there exists $c_{1} > 0$ such that
  \begin{equation}
    \label{eq:integral_estimate}
    \norm{
      \int_{0}^{T} \parentheses{ \mathbf{F}(\overline{\bw}(\tau),\tau) - \overline{\mathbf{F}}(\overline{\bw}(\tau)) } d\tau
    }
    \le c_{1}
    \quad
    \forall T \in [0, t_{1}/\eps].
  \end{equation}
  Therefore, we have
  \begin{equation*}
    \norm{ \bw(T) - \overline{\bw}(T) }
    \le \eps\, \lambda_{1} \int_{0}^{T} \norm{ \bw(\tau) - \overline{\bw}(\tau) } d\tau 
    + \eps\,c_{1},
  \end{equation*}
  and so Gronwall's inequality~\cite{Gr1919} yields
  \begin{equation*}
    \norm{ \bw(T) - \overline{\bw}(T) }
    \le \eps\, c_{1}\, e^{\eps\,\lambda_{1} T}
    = \eps\, c_{1}\, e^{\lambda_{1} t_{1}}. \qedhere
  \end{equation*}
\end{proof}

One can slightly generalize the above as follows:
\begin{corollary}
  \label[corollary]{cor:averaging}
  Under the same assumptions from~\Cref{prop:averaging} except that the initial conditions satisfy
  \begin{equation*}
    \norm{ \bw(0) - \overline{\bw}(0) } = O(\eps)
    \quad\text{with}\quad
    \overline{\bw}(0) \in \mathcal{K},
  \end{equation*}
  we have
  \begin{equation*}
    \norm{ \bw(T) - \overline{\bw}(T) } \le \eps\,c_{2}\, e^{\lambda_{1} t_{1}}
    \quad
    \forall T \in [0,t_{1}/\eps].
  \end{equation*}
  for some constants $c_{2}, \lambda_{1} > 0$.
\end{corollary}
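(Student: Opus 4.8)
The plan is to reproduce the proof of \Cref{prop:averaging} almost verbatim, carrying along the single extra contribution produced by the now-nonzero initial discrepancy. Interpret the hypothesis $\norm{\bw(0)-\overline{\bw}(0)}=O(\eps)$ concretely as $\norm{\bw(0)-\overline{\bw}(0)}\le \eps\,c_{0}$ for some constant $c_{0}>0$. The crucial structural fact from the proposition is unaffected by the perturbation: since $\overline{\bw}(0)\in\cK$ \emph{exactly} and the averaged (massless) flow~\eqref{eq:massless-fast_time} preserves $\cK$, one still has $\overline{\bw}(\tau)\in\cK$ for all $\tau$, hence $A_{j}\overline{\bw}_{j}(\tau)=0$, $e^{\tau A}\overline{\bw}(\tau)=\overline{\bw}(\tau)$, and therefore the identity $\mathbf{G}(\overline{\bw}(\tau))=\overline{\mathbf{F}}(\overline{\bw}(\tau))$ continues to hold. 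The perturbation lives only in $\bw(0)$, so every step involving $\overline{\bw}$ is literally the same as before; in particular the integral estimate~\eqref{eq:integral_estimate} is evaluated along the \emph{unchanged} curve $\overline{\bw}$, so its constant $c_{1}$ is reused without modification.

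Integrating~\eqref{eq:massive-fast_time} and~\eqref{eq:massless-fast_time} from $0$ to $T$ and subtracting now retains a boundary term:
\[
  \bw(T)-\overline{\bw}(T)
  = \bigl(\bw(0)-\overline{\bw}(0)\bigr)
  + \eps\int_{0}^{T}\bigl(\mathbf{F}(\bw(\tau),\tau)-\mathbf{G}(\overline{\bw}(\tau))\bigr)\,d\tau.
\]
Applying the same decomposition as in the proposition---the Lipschitz bound with constant $\lambda_{1}$ on the first piece and the estimate~\eqref{eq:integral_estimate} on the second---and taking norms yields
\[
  \norm{\bw(T)-\overline{\bw}(T)}
  \le \eps\,c_{0}+\eps\,c_{1}
  + \eps\,\lambda_{1}\int_{0}^{T}\norm{\bw(\tau)-\overline{\bw}(\tau)}\,d\tau .
\]
Gronwall's inequality then gives $\norm{\bw(T)-\overline{\bw}(T)}\le \eps\,(c_{0}+c_{1})\,e^{\eps\lambda_{1}T}=\eps\,(c_{0}+c_{1})\,e^{\lambda_{1}t_{1}}$ for all $T\in[0,t_{1}/\eps]$, so the claim follows with $c_{2}\defeq c_{0}+c_{1}$.

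I do not expect a genuine obstacle here, since the argument is a routine perturbation of the proposition; the only point requiring care is qualitative rather than computational. One must ensure that \Cref{prop:averaging}'s standing assumptions---Lipschitz continuity and containment of the solution curves in the admissible region over the \emph{entire} window $[0,t_{1}/\eps]$---apply to the perturbed solution $\bw$ as well, not merely to the one starting on $\cK$. This is what is meant by invoking ``the same assumptions,'' and it is reasonable because, by continuous dependence on initial data, an $O(\eps)$ displacement of $\bw(0)$ keeps $\bw$ inside the Lipschitz region on the stated timescale once $\eps$ is small enough. The slightly delicate aspect is that this containment is assumed on the long fast-time interval $[0,t_{1}/\eps]$, so one should note that the $O(\eps)$ closeness established by the estimate itself is consistent with (and does not contradict) the a priori containment hypothesis.
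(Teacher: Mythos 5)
Your proposal is correct and is essentially identical to the paper's own proof: both retain the boundary term $\bw(0)-\overline{\bw}(0)$ in the integrated equation, bound it by $O(\eps)$ using the hypothesis, reuse the Lipschitz and integral estimates from \Cref{prop:averaging} unchanged (which is valid since $\overline{\bw}$ still starts on $\cK$), and conclude via Gronwall with $c_{2}$ the sum of the two constants. Your closing remark about the standing assumptions applying to the perturbed solution $\bw$ on $[0,t_{1}/\eps]$ is a fair reading of "under the same assumptions" and matches the paper's intent.
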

\begin{proof}
  We now have
  \begin{align*}
    \bw(T) - \overline{\bw}(T)
    = \bw(0) - \overline{\bw}(0) + \eps \int_{0}^{T} ( \mathbf{F}(\bw(\tau), \tau) - \mathbf{G}(\overline{\bw}(\tau)) )\, d\tau.
  \end{align*}
  However, by assumption, there exists $\tilde{c}_{2} > 0$ such that $\norm{ \bw(0) - \overline{\bw}(0) } \le \tilde{c}_{2}\,\eps$, whereas the same argument as above applies to the integral term.
  Therefore,
  \begin{equation*}
    \norm{ \bw(T) - \overline{\bw}(T) }
    \le \eps\, \lambda_{1} \int_{0}^{T} \norm{ \bw(\tau) - \overline{\bw}(\tau) } d\tau 
    + \eps\,(c_{1} + \tilde{c}_{2}),
  \end{equation*}
  Setting $c_{2} \defeq c_{1} + \tilde{c}_{2}$, Gronwall's inequality~\cite{Gr1919} again yields the desired result.
\end{proof}

\subsection{Approximation of Massive System by Massless System}
Using the above proposition, one can approximate solutions of the massive system~\eqref{eq:massive_vec} with $\bz(0)$ being $O(\eps)$-close to $\mathcal{K}$ by corresponding massless solutions:
\begin{theorem}[Approximation of massive system by massless system]
  \label[theorem]{thm:massive-massless}
  Consider the solution $t \mapsto \bz(t)$ of the massive system~\eqref{eq:massive_vec} (or~\eqref{eq:Hamilton}) for $0 \le t \le t_{1}$ with some $t_{1} > 0$ with initial condition satisfying
  \begin{equation*}
    \norm{ \bz(0) - \overline{\bw}(0) } = O(\eps)
    \quad\text{for some}\quad
    \overline{\bw}(0) \in \mathcal{K},
  \end{equation*}
  where
  \begin{equation*}
    \overline{\bw}_{j}(0) =
    \begin{bmatrix}
      \br_{j}(0) \smallskip\\
      q_{j} J \br_{j}(0)
    \end{bmatrix}
    \quad\text{for}\quad
    1 \le j \le N.
  \end{equation*}
  Define $T \mapsto \bw(T)$ by
  \begin{equation}
    \label{eq:w-z}
    \bw(T) \defeq e^{-T A}\,\bz(\eps T),
  \end{equation}
  as well as $T \mapsto \overline{\bw}(T)$ by
  \begin{equation*}
    \overline{\bw}_{j}(T) \defeq
    \begin{bmatrix}
      \br_{j}(\eps T) \smallskip\\
      q_{j} J \br_{j}(\eps T)
    \end{bmatrix}
    \quad\text{for}\quad
    1 \le j \le N,
  \end{equation*}
  where $t \mapsto \br(t)$ is the solution to the massless system~\eqref{eq:Kirchhoff} with the initial point $\br(0)$.

  Then, assuming that $\bw$ and $\overline{\bw}$ satisfy the same assumptions from~\Cref{cor:averaging}, the massless solution
  \begin{equation*}
    t \mapsto \overline{\bw}(t/\eps)
    \quad\text{with}\quad
     \overline{\bw}_{j}(t/\eps) =
    \begin{bmatrix}
      \br_{j}(t) \smallskip\\
      q_{j} J \br_{j}(t)
    \end{bmatrix}
  \end{equation*}
  approximates the massive solution $t \mapsto \bz(t)$ of~\eqref{eq:massive_vec} as follows:
  \begin{equation*}
    \norm{ \bz(t) - \overline{\bw}(t) } \le \eps\,c_{2}\, e^{\lambda_{1} t_{1}}
    \quad
    \forall t \in [0,t_{1}],
  \end{equation*}
  where the constants $c_{2}, \lambda_{1} > 0$ are the same ones from~\Cref{cor:averaging}.
\end{theorem}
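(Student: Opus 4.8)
The plan is to obtain the estimate as an essentially immediate consequence of \Cref{cor:averaging}, with the only genuinely new ingredient being a change-of-variables argument that transfers the fast-time bound on $\bw - \overline{\bw}$ back to the original variable $\bz$. First I would check that the pair $\bw(T) \defeq e^{-TA}\bz(\eps T)$ from \eqref{eq:w-z} and $\overline{\bw}(T)$ defined in the statement satisfies the hypotheses of \Cref{cor:averaging}. By the van der Pol derivation leading to \eqref{eq:massive-fast_time}, this $\bw$ solves $\od{\bw}{T} = \eps\,\mathbf{F}(\bw,T)$, while $\overline{\bw}$ solves \eqref{eq:massless-fast_time}; the required regularity is assumed in the statement. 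At $T=0$ we have $\bw(0) = \bz(0)$ since $e^{0} = I$, so the theorem's hypothesis $\norm{\bz(0) - \overline{\bw}(0)} = O(\eps)$ with $\overline{\bw}(0) \in \cK$ is exactly the initial-condition hypothesis of \Cref{cor:averaging}. Applying that corollary then gives $\norm{\bw(T) - \overline{\bw}(T)} \le \eps\, c_{2}\, e^{\lambda_{1} t_{1}}$ for all $T \in [0,t_{1}/\eps]$.

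The crux is to propagate this bound through the inverse van der Pol map without degrading its $O(\eps)$ order. Inverting \eqref{eq:w-z} gives $\bz(\eps T) = e^{TA}\bw(T)$, while the fact already established in the proof of \Cref{prop:averaging}---that $\overline{\bw}(T) \in \cK$ forces $A\,\overline{\bw}(T) = 0$ and hence $e^{TA}\overline{\bw}(T) = \overline{\bw}(T)$---lets me subtract cleanly:
\begin{equation*}
  \bz(\eps T) - \overline{\bw}(T) = e^{TA}\bigl( \bw(T) - \overline{\bw}(T) \bigr).
\end{equation*}
At this point I would invoke the structural observation that $A$ is skew-symmetric: each block $A_{j} = \begin{bmatrix} -q_{j} J & I \\ -I & -q_{j} J \end{bmatrix}$ satisfies $A_{j}^{T} = -A_{j}$ because $J^{T} = -J$, so $A^{T} = -A$ and $e^{TA}$ is orthogonal for every $T$. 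Consequently $\norm{e^{TA}} = 1$ in the operator $2$-norm, and taking norms above yields $\norm{\bz(\eps T) - \overline{\bw}(T)} = \norm{\bw(T) - \overline{\bw}(T)} \le \eps\, c_{2}\, e^{\lambda_{1} t_{1}}$. Setting $T = t/\eps$, so that $\eps T = t$, gives the claimed bound for all $t \in [0,t_{1}]$.

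The step I expect to carry the real weight is the isometry property $\norm{e^{TA}} = 1$. A priori one should worry that the inverse van der Pol map is evaluated at the large fast time $T \sim t_{1}/\eps$, so that $e^{TA}$ might amplify the $O(\eps)$ discrepancy between $\bw$ and $\overline{\bw}$ and ruin the estimate over the full interval. The skew-symmetry of $A$---a consequence of the rotational block structure inherited from $J$---is precisely what guarantees that $e^{TA}$ is an exact isometry, so the error transfers with no loss and with no $T$-dependent constant. Once this is in hand, the remaining manipulations are the routine change-of-variables bookkeeping, and the only notational care needed is to identify $\overline{\bw}$ evaluated at the fast time $T = t/\eps$ with the massless approximation whose $j$-th block is $(\br_{j}(t),\, q_{j} J\,\br_{j}(t))$ appearing in the statement.
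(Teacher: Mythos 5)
Your proposal is correct and follows essentially the same route as the paper's proof: both invert the van der Pol transformation, use $A\,\overline{\bw}=0$ on $\cK$ to write the difference as $e^{TA}\bigl(\bw(T)-\overline{\bw}(T)\bigr)$, invoke skew-symmetry of $A$ to get $\norm{e^{TA}}=1$, and apply \Cref{cor:averaging} via $\bw(0)=\bz(0)$. The only difference is expository ordering (you apply the corollary first and then transfer the bound, while the paper does the reverse), which is immaterial.
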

\begin{proof}
  Rewriting~\eqref{eq:w-z} in the original time variable $t = \eps\,T$, we have
  \begin{equation*}
    \bz(t) = e^{t A/\eps}\, \bw(t/\eps),
  \end{equation*}
  and so, for every $t \in [0,t_{1}]$,
  \begin{align*}
    \norm{ \bz(t) - \overline{\bw}(t) }
    &= \norm{ e^{t A/\eps}\, \bw(t/\eps) - \overline{\bw}(t/\eps) } \\
    &= \norm{ e^{t A/\eps} \parentheses{ \bw(t/\eps) - \overline{\bw}(t/\eps) } } \\
    &\le \norm{ \bw(t/\eps) - \overline{\bw}(t/\eps) } \\
    &\le \eps\,c_{2}\, e^{\lambda_{1} t_{1}},
  \end{align*}
  where the second line follows because $e^{t A/\eps}\, \overline{\bw}(t/\eps) = \overline{\bw}(t/\eps)$ due to $A \overline{\bw}(t/\eps) = 0$; the third line follows from $\norm{ e^{t A/\eps} } = 1$ due to $A$ being skew-symmetric; the fourth line follows from~\Cref{cor:averaging} noting that $\bw(0) = \bz(0)$ and hence $\norm{ \bw(0) - \overline{\bw}(0) } = O(\eps)$.
\end{proof}

\section{Slow Manifolds and Normal Forms}
\label{sec:slow_manifold}
In the kinematic approximation, $\eps=0$, and the solution remains on the kinematic subspace for all time. When $\eps \ll 1$ is nonzero, the kinematic subspace is not invariant under the flow, but our numerical experiments and averaging result indicate that the solutions starting close to the subspace $\cK$ remain close to it for a significant time and that deviations from $\cK$ take the form of fast oscillations.

A natural question to ask is whether there exists an invariant manifold near $\cK$ on which the fast oscillations are entirely suppressed; this is called a slow manifold $\cS$. As a motivating system, consider an ODE system of the form
\begin{equation}\label{eq:fast_slow}
\begin{split}
\dot{z} &= f(z,w);\\
\eps \dot{w} &= g(z,w)
\end{split}
\end{equation}
for $z\in \mathbb{R}^n$, $w\in \mathbb{R}^m$, and sufficiently smooth functions $f$ and $g$. 
Formally, the $z$ coordinates evolve on a slow $O(1)$ timescale, while the $w$ coordinates evolve on a time scale of $O(\epsilon^{-1})$.
Consider first the kinematic limit $\eps=0$. In a neighborhood of a point $(z_0, w_0)$ at which the matrix of partial derivatives $\pd{g}{w}$ is nonsingular, the implicit function theorem guarantees that we may solve the second equation in the form 
\begin{equation} \label{eq:slow_parameterization}
w=\mathscr{W}(z).
\end{equation}
Solutions confined to this neighborhood obey the kinematic dynamics
$$
\dot{z} = f(z,\mathscr{W}(z)),
$$
which evolves on the $O(1)$ timescale. 

The behavior of the system for small nonzero $\eps$ is of longstanding and continued interest. The question is whether there exists a manifold on which the dynamics of $w$ are ``slaved'' to those of $z$ and so evolve only on the slow time scale. 
When $\eps>0$, we may try to solve for $w$ as a function of $z$ as a formal power series in $\eps$, letting $\cS_j$ define the sequence of $O(\eps^j)$ truncations of this series. Then $\cS$ is the formal limit of this sequence. In similar problems, the answer is often that, while we may find an asymptotic expansion of $\cS$ to arbitrary order in perturbation theory, the constructed manifold is not invariant: transverse oscillations will inevitably arise. We briefly discuss the history of this problem and the mathematical reasons for this non-invariance in the concluding~\Cref{sec:conclusions}. The slow manifold, while not invariant for all times, is approximately invariant, and can be considered as an $\eps$-dependent correction to $\cK$. Despite its non-invariance, it provides an improved approximation to the dynamics over the kinematic approximation.

We derive a \emph{normal form} expansion for the dynamics in a neighborhood of $\cS$ using the Lie transform perturbation method, which can be thought of as a form of averaging. This procedure also generates an approximate expression for $\cS$ itself. Averaging theorems are typically proved by constructing an appropriate near-identity change of variables so that the equations in the transformed coordinates yield simplified equations with rapidly oscillating terms removed. The Lie transform method constructs such a change of variables that is formally canonical, thereby preserving the Hamiltonian form of the equations.

\subsection{Further changes of variables}
To prepare the ground for the normal-form calculation, we introduce new coordinates which separate fast motions along the manifold $\cK$ from slower motions transverse to it. The kinematic subspace is defined as the solution to 
$$
\bp_j= q_j J \br_j.
$$
Guided by this, we define new coordinates
\begin{equation} \label{eq:rpToRP}
\begin{split}
    \bR_j &= \binom{X_j}{Y_j} = \frac{1}{2} \left( \br_j - q_j J \bp_j \right); \\
    \bP_j & = \binom{U_j}{V_j} = \frac{1}{2} \left( \bp_j - q_j J \br_j \right).
\end{split}
\end{equation}
The $\bP_j$ coordinates vanish in the kinematic subspace $\cK$, and thus describe motions transverse to it, while the $\bR_j$ coordinates describe motions within $\cK$. In these coordinates, the symplectic form~\eqref{eq:Omega} is written as
\begin{equation} \label{eq:OmegaXY}
    \Omega = 2q_j
    \parentheses{\d X_j \wedge\d Y_j -\d U_j \wedge\d V_j }
\end{equation}
and the Hamiltonian~\eqref{eq:H} takes the form
\begin{equation} \label{eq:HRP}
    H(\bR,\bP) = H_0(\bP,\eps) + E\left(\bR + \cJ \bP\right),
\end{equation}
where
\begin{equation} \label{eq:calJ}
\cJ= 
  \begin{bmatrix}
     q_1 J & & & & \\ 
     & q_2 J & & & \\ 
     & & \ddots & & \\ 
     & & & & q_N J
  \end{bmatrix}
\end{equation}
is a $(2N)\times(2N)$ matrix and hence $\br = \bR + \cJ \bP$, and
\begin{equation}\label{eq:H0P}
H_0(\bP,\eps) \defeq \frac{2}{\eps} \sum_{j=1}^{N} \parentheses{U_j^2 + V_j^2}.
\end{equation}

Since we assume $\norm{\bP}\ll1$, we expand the vortex-interaction energy as
\begin{equation} \label{eq:E_expansion}
    E(\bR+\cJ\bP) =  \sum_{k=0}^{\infty}E_k(\bR,\bP;\bq), 
\end{equation}
where $E_k(\bR,\bP)$ is a homogeneous polynomial of degree $k$ in the components of $\bP$ with coefficients that depend nonlinearly on $\bR$. The leading term is just
\begin{equation}\label{eq:E0}
    E_0(\bR,\bP;\bq) = E(\bR).
\end{equation}

Considering only the two leading terms of the Hamiltonian~\eqref{eq:HRP}, i.e., $H_0(\bP,\eps)+ E(\bR)$, the dynamics is decoupled into a fast motion in $\bP$ and a slow motion of $\bR$ in the kinematic subspace, which is identical to the evolution of massless vortices. The higher-order terms, of course, couple these motions. 

A final transformation to complex coordinates greatly simplifies the remaining computation. Define
\begin{equation} \label{eq:defZW}
    Z_j = X_j + \rmi Y_j \text{ and } W_j = U_j + \rmi V_j.
\end{equation}
In these variables, the symplectic form is written 
\begin{equation} \label{eq:OmegaZW}
    \Omega = \rmi q_j \left( \d Z_j \wedge\d \bar{Z}_j - \d W_j \wedge\d \bar{W}_j \right),
\end{equation}
where overbars denote complex conjugates,
and the ``fast'' term in the Hamiltonian is
\begin{equation}\label{eq:H0ZW}
    H_0(\bW,\eps) = \frac{2}{\eps}\sum_{j=1}^{N} |W_j|^2.
\end{equation}
The further terms are defined as
\begin{equation}\label{eq:HnZW}
H_k(\bZ,\bW;\bq) \defeq E_{k-1}(\bZ,\bW;\bq),
\end{equation}
where the terms are the right are defined in Eq.~\eqref{eq:E_expansion}, and we note that $H_1=H_1(\bZ;q)$ is independent of $\bW$.

\subsection{Specialization to two-vortex systems}
The normal form computation depends on the null space of a certain linear oscillator that appears in the calculation. The null space, in turn, depends on the dimension of the system and the topological charges of the vortices. From here on, we restrict our attention to two systems of $N=2$ vortices, namely equal co-rotating vortices with $q_1=q_2=1$ and counter-rotating vortices with $-q_1=q_2=1$.

In complex coordinates, the next three terms in series~\eqref{eq:HnZW} are
\begin{subequations} \label{eq:H123ZW}
\begin{equation} \label{eq:H1ZW}
H_1 = E(\bZ) =
\log \left(1-\abs{Z_1}^2\right)+ \log \left(1-\abs{Z_2}^2\right)
+q_1q_2 \log{\frac{\abs{1-Z_1\bar{Z}_2}^2}{\abs{Z_1-Z_2}^2}},
\end{equation}
\begin{equation} \label{eq:H2sub}
H_2 = \left( \frac{q_1 \bar{Z}_1}{1-\abs{Z_1}^2}+\frac{q_2\left(1-\abs{Z_1}^2\right)}{(Z_1-Z_2)(1-Z_1 \bar{Z}_2)}\right)\rmi W_1 
     +\left( \frac{q_2 \bar{Z}_2}{1-\abs{Z_2}^2}-\frac{q_1\left(1-\abs{Z_2}^2\right)}{(Z_1-Z_2)(1-\bar{Z}_1 Z_2)}\right)\rmi W_2 +\cc,
\end{equation}
and
\begin{equation} \label{eq:H3sub}
\begin{split}
H_3 = &-\frac{\abs{W_1}^2}{\left(1-\abs{Z_1}^2\right)^2}
-\frac{\abs{W_2}^2}{\left(1-\abs{Z_2}^2\right)^2} 
+\left(\frac{W_1 W_2}{\left(Z_1-Z_2\right)^2}
-\frac{\bar{W}_1 W_2}{\left(1- \bar{Z}_1 Z_2\right)^2} + \cc\right)\\
&+\frac{W_1^2}{2}  \left(
\frac{\bar{Z}_1^2}{\left(1-\abs{Z_1}^2\right)^2}
-\frac{q_1 q_2 {\left(1-\abs{Z_2}^2\right)^2}
\left(1+\abs{Z_2}^2 -2 Z_1 \bar{Z}_2\right)}{\left(Z_1-Z_2\right)^2 \left(1-Z_1 \bar{Z}_2\right)^2}
\right)+\cc \\
&+\frac{W_2^2}{2}  \left(
\frac{\bar{Z}_2^2}{\left(1-\abs{Z_2}^2\right)^2}
-\frac{q_1 q_2 {\left(1-\abs{Z_1}^2\right)^2}\left(1+\abs{Z_1}^2 -2 \bar{Z}_1 Z_2\right)}{\left(Z_1-Z_2\right)^2 \left(1-\bar{Z}
_1 Z_2\right)^2}
\right)+\cc.
\end{split}
\end{equation}
\end{subequations}
The abbreviation $\cc$ stands for the complex conjugate in the expressions for $H_2$ and $H_3$, the latter of which contains ten monomials when the conjugate terms are included. The series~\eqref{eq:HnZW} is \emph{almost} a power series in the $\bW$ coordinates: $H_0(W,\epsilon)$ is quadratic in $\bW$ but the factor of $\eps^{-1}$ makes it (formally) large; beginning with $H_1$, the term $H_k$ is a homogeneous polynomial of order $(k-1)$ in the $\bW$ components with $\bZ$-dependent coefficients.

\subsection{Hamiltonian normal forms}
We assume that there exists an invariant manifold near the kinematic subspace $\cK$ on which $W_j$ remains small, and its fast dynamics are suppressed. Our goal now is to construct a symplectic transformation to a new coordinate system in which the asymptotic expansion of the slow manifold is parameterized and the evolution equations are simplified.

Given a Hamiltonian $H(x,\eps)$ and a canonical transformation $x=\mathcal{X}(y,\eps)$, we let $G(y,\epsilon)=H(\mathcal{X}(y,\eps),\epsilon)$ represent the Hamilton in the new coordinates. The change of variables $\mathcal{X}(y,\eps)$ that is defined below in Eq.~\eqref{eq:S_evolution} depends on a function $S(x,\epsilon)$ called the Lie transform generating function. To emphasize that the change of variables is constructed using $S$, $G$ is called the \emph{Lie transform} of $H$ with respect to $S$ and is written $G=\mathcal{L}_S H$; see Eq.~\eqref{eq:LieTrans}.

The purpose of the Lie transform perturbation algorithm is to formally construct a canonical change of variables $x = \mathcal{X}(y,\eps)$ that converts a Hamiltonian of the form
\begin{equation}\label{eq:Hsub}
H(x)=H_*(x) = \sum_{j=0}^{\infty} H_j(x)
\end{equation}
to the \emph{normal-form} Hamiltonian 
\begin{equation} \label{eq:Hsuper}
G(y)=H^*(y) = \sum_{j=0}^{\infty} H^j(y).
\end{equation}
In what follows, $x=(\bZ,\bW)$ as defined in Eq.~\eqref{eq:defZW}, and the ``new'' variables are denoted by $y=(\cZ,\cW)$.

The Hamiltonian is in normal form if it is maximally simplified with respect to a criterion that we will define below. In practice, one must truncate the expansions of the two Hamiltonians to some finite order. In that case, the transformation is canonical up to that order but may introduce a small ``non-canonicality'' of higher order into the transformation.  Because $\mathcal{X}(y,\eps)$ is canonical, the symplectic form in the new coordinates takes the same form as in the original coordinates.

\subsection{A preview of the results}
\label{sec:preview}

We first present the results of the calculation, to allow the reader to see the benefit of the calculation before (or instead of) immersing themselves in its details. We first show the leading terms in the normal form Hamiltonian and then formulas for the leading nonzero terms in the slow manifold.

\subsubsection{Preview of the normal-form Hamiltonian}

Given the Hamiltonian $H_*(\bZ,\bW)$ whose leading terms are given by Equations~\eqref{eq:H0ZW} and~\eqref{eq:H123ZW}, the two leading-order terms in the transformed Hamiltonian~\eqref{eq:Hsuper} are unchanged except for changing the variable names:
$$
H^0(\cZ,\cW)=H_0(\cW,\eps) \text{ and } H^1(\cZ,\cW) = H_1(\cZ) = E(\cZ).
$$
The following two terms simplify considerably. Regardless of the choice of topological charges $q_1$ and $q_2$, 
\begin{equation} \label{eq:H2_is_0}
    H^2(\cZ,\cW) \equiv 0.
\end{equation}
Similarly, all terms $H^{2j}$ in the series for the normal form Hamiltonian vanish, i.e., those containing monomials $\cW^{\balpha} \bar{\cW}^{\bbeta}$ where $\abs{\balpha}+\abs{\bbeta}$ is odd, using the multi-index notation defined in Eq.~\eqref{eq:multi-index}.
Thus, the normal form Hamiltonian depends only on even powers of $\abs{\cW}$. 

The form of further terms in the expansion depends on the values chosen for the topological charges $q_j$. In the case of co-rotating vortices $q_1 = q_2 = 1$,  
\begin{equation} \label{eq:H3equal}
H^3_{\rm equal}=
-\frac{\abs{\cW_1}^2}{\left(1-\abs{\cZ_1}^2\right)^2}
-\frac{\abs{\cW_2}^2}{\left(1-\abs{\cZ_2}^2\right)^2}
-\frac{\cW_1 \bar{\cW}_2}{\left(1-\cZ_1 \bar{\cZ}_2\right)^2}
-\frac{\bar{\cW}_1 \cW_2}{\left(1-\bar{\cZ}_1 \cZ_2\right)^2},
\end{equation}
while in the oppositely-rotating case $-q_1 = q_2 = 1$,  
\begin{equation} \label{eq:H3opposite}
H^3_{\rm opposite}=
\frac{\abs{\cW_1}^2}{\left(1-\abs{\cZ_1}^2\right)^2}
-\frac{\abs{\cW_2}^2}{\left(1-\abs{\cZ_2}^2\right)^2}
+\frac{\cW_1 \cW_2}{\left(\cZ_1-\cZ_2\right)^2}
+\frac{\bar{\cW}_1 \bar{\cW}_2}{\left(\bar{\cZ}_1-\bar{\cZ}_2\right)^2}.
\end{equation}

Nonzero higher-order terms in the normal-form series for $H^*$ depend only on products of the four monomials in $\cW$ and $\bar\cW$ that appear in the corresponding expression for $H^3$. While the terms appearing in $H^3_{\rm equal}$ and $H^3_{\rm opposite}$ are subsets of those appearing in $H_3$, this may not be true for higher-order terms. The perturbation procedure, while simplifying the system at a given order, may introduce additional terms at higher order, but these must satisfy the simplicity conditions defining the normal form.

\subsubsection{Preview of the slow manifold expansion}
The particular form of the slow manifold also depends on our choice of the topological charges $q_j$. In the corotating case $q_1=q_2=1$,
\begin{equation} \label{eq:slowWco}
\begin{split}
W_1=& \frac{\rmi \eps}{2} \cdot \frac{-Z_1 Z_2 \bar{Z}_1^2+2 Z_1 Z_2 \bar{Z}_2 \bar{Z}_1-Z_1 \bar{Z}_2-Z_2 \bar{Z}_2+1}{\left(Z_1 \bar{Z}_1-1\right) \left(Z_2 \bar{Z}_1-1\right)  \left(\bar{Z}_1-\bar{Z}_2\right)}  + O(\eps^2);\\
W_2 =& \frac{\rmi \eps}{2} \cdot\frac{Z_1 Z_2 \bar{Z}_2^2-2 Z_1 Z_2 \bar{Z}_1 \bar{Z}_2+Z_1 \bar{Z}_1+Z_2 \bar{Z}_1-1}{\left(\bar{Z}_1-\bar{Z}_2\right) \left(Z_1 \bar{Z}_2-1\right) \left(Z_2 \bar{Z}_2-1\right)}   + O(\eps^2).
\end{split}
\end{equation}
Ignoring the $O(\eps^2)$ truncation error, this defines the approximation $\cS_1$ to the slow manifold; ignoring the terms of $O(\eps)$ gives back $W=0$, i.e., the kinematic subspace $\cK$ corresponds to $\cS_0$, the zeroth-order approximation to the slow manifold.

In the counter-rotating case $-q_1 = q_2 = 1$, the leading-order terms in the slow manifold expansion are
\begin{equation} \label{eq:slowWcounter}
    \begin{split}
        W_1 &= \frac{\rmi \eps}{2} \cdot
        \frac{Z_1 Z_2 \bar{Z}_1^2-2 Z_1 \bar{Z}_1+Z_1 \bar{Z}_2-Z_2 \bar{Z}_2+1}
        {\left(Z_1 \bar{Z}_1-1\right) \left(Z_2 \bar{Z}_1-1\right)
   \left(\bar{Z}_1-\bar{Z}_2\right)}+O(\eps^2);\\
        W_2 & = \frac{\rmi \eps}{2} \cdot
        \frac{Z_1 Z_2 \bar{Z}_2^2-2 Z_2 \bar{Z}_2-Z_1 \bar{Z}_1+Z_2 \bar{Z}_1+1}
        {\left(\bar{Z}_1-\bar{Z}_2\right) \left(Z_1 \bar{Z}_2-1\right) \left(Z_2
   \bar{Z}_2-1\right)}+O(\eps^2).
    \end{split}
\end{equation}
Ignoring the $O(\eps^2)$ truncation error, this defines the approximation $\cS_1$ to the slow manifold.

\subsubsection{Some conclusions}
These formulas tell us a few things right away. Because $\rmi \od{\cW_j}{t}=\pd{H^*}{\bar{\cW}_j}$ and the normal-form expansion of $H^*$ contains no terms linear in $\bar{\cW}_j$, then 
$$
\evalat{\od{\cW_j}{t}}{{\cW=0}}=0,
$$ 
and the subspace $\cW=0$ is invariant under the normal-form dynamics. Since the normal form Hamiltonian on the invariant manifold $\cW=0$ is equivalent to the massless dynamics, the solutions confined to this subspace evolve according to the same equations as massless vortices.
Any change in the slow dynamics due to a nonzero mass, therefore, must arise due to the change of variables from $(\cZ,\cW)$ to $(\bZ,\bW)$ coordinates. However, if $\cW$ is small and nonzero, $H^3$ and subsequent terms couple the dynamics of the $\cZ$ coordinates with those of the $\cW$ coordinates.

Previous work on systems with slow manifolds complicates these conclusions, which we will discuss in the final section and in follow-up work. Our assumption that a slow manifold exists and is approximately invariant can often be made rigorous only for finite times. Terms beyond all algebraic orders in expansion~\eqref{eq:Hsuper} can eventually cause the solution to leave a neighborhood of the slow manifold and may lead to chaotic dynamics~\cite{Camassa.1995, Vanneste.2008}.

\subsection{The Lie Transform Perturbation Algorithm}

We outline the main ideas underlying the Lie transform perturbation algorithm and its implementation. The algorithm was introduced in the 1960s, with significant contributions from Deprit, Henrard, and Hori~\cite{Deprit1969, Henrard1973, Hori1966}. Two readable accounts of the theory underlying this algorithm, as well as its implementation, are the textbook of Meyer et al.\ and the documentation for a software package for computing normal forms by Collins et al.~\cite{Collins2008, Meyer2010}.

The algorithm generates a pair of formal power series in the $\cW$ variables with $\cZ$-dependent coefficients. The first describes the change of variables used to define this \emph{slow manifold}, and the second describes the \emph{normal form} Hamiltonian governing the dynamics in a neighborhood of this manifold. By `formal power series,' we mean that we do not consider the series's convergence. In fact, such series generally diverge, so that any information about the dynamics must come from truncating the series at some finite order. This reflects the fact that while a genuine invariant manifold does not exist, we may construct a manifold to which solutions remain close for a long but finite time. By computing the approximate manifold to higher order, we can further decrease the distance between the manifold and the evolving solutions.
 
In a manner similar to the averaged equations described in~\Cref{ssec:averaging}, the normal form Hamiltonian is useful because it is, ``simpler'' than the original Hamiltonian. Here, ``simple'' has a precise meaning given by Eqs.~\eqref{eq:what_is_simple} and~\eqref{eq:adjoint_resonant} and the paragraph following them. For a fuller technical definition, we refer to~\cite[Theorem 10.1.1]{Meyer2010}.

The algorithm proceeds as follows. Assume that the leading term in the Hamiltonian series is quadratic in $x$, so that the leading-order dynamics satisfies the Hamiltonian linear system
$$
\dot{x} = J\nabla{H_0}(x) = A x.
$$
Then, under the assumption that $A$ is diagonalizable, the transformed Hamiltonian can be constructed such that
\begin{equation} \label{eq:what_is_simple}
    H^j \left(e^{A t} y \right) = H^j(y).
\end{equation}
This is equivalent to the statement that $\PB{H^j}{H_0}=0$. Defining the adjoint operator $\ad_{H_0}f \defeq \PB{H_0}{f}$, we may write this as
\begin{equation}\label{eq:adjoint_resonant}
\ad_{H_0}H^j = 0.
\end{equation}

These two equations provide a mathematical definition of ``simplicity,'' but it may be less clear how this makes the equations simpler to study. What does that mean here? Looking at Eq.~\eqref{eq:H123ZW}, the term $H_j$ is a homogeneous polynomial of degree $j-1$ in the $W$ variables with $Z$-dependent coefficients. There exist $\binom{j+3}{3}$ degree-$j$ monomials in four variables (i.e., $W_1$, $W_2$, $\bar{W}_1$, and $\bar{W}_2$), so $H_j$ is a sum containing as many as $\binom{j+2}{3}$ terms. This is proven using the ``stars and bars'' argument~\cite{Bona.2011, WikipediaStarsAndBars}. You may confirm that the three polynomials shown in Eq.~\eqref{eq:H123ZW} are sums of $\binom{3}{3}=1$, $\binom{4}{3}=4$ and $\binom{5}{3}=10$. By contrast, we will see that the normal form polynomials vanish identically when $j$ is even, and that when $j=2l-1$ is odd, $H_j$ is the sum of at most $l^2$ monomials. Indeed we see that the polynomials~\eqref{eq:H3equal} and~\eqref{eq:H3opposite} each contain four terms. For large values of $j$, the number of monomials in $H_j$ grows cubically, while the number grows  quadratically for $H^j$. Therefore, the normal form is simpler than the original in that $H^j$ contains fewer terms than $H_j$, and in fact vanishes identically when $j$ is even. 

The theorem generalizes when $A$ is not diagonal.

We outline the method here, but omit details for brevity, retaining only enough for the reader to see why certain terms are included in the normal form and others are not; for the full details, consult the references~\cite{Meyer2010,Collins2008}. The method is based on the observation that any symplectic near-identity transformation $x=\mathcal{X}(y,\eps)$ must correspond to the solution of (possibly nonautonomous) Hamiltonian initial value problem
\begin{equation}\label{eq:S_evolution}
\od{x}{\eps} = S(x,\eps); \; x(0) =y.
\end{equation}
The function $S(x,\eps)$ is called the Lie transform generating function or simply the generating function. By implementing the algorithm, we construct the generating function as a formal power series expansion
$$
S(x,\eps) = \sum_{j=0}^\infty \frac{1}{j!} S_{j+1}(x) .
$$
Let $\mathcal{X}_\eps(y)$ be the fundamental solution operator of system~\eqref{eq:S_evolution} with initial condition $y$ at ``time'' $\eps=0$, and $\mathcal{X}(y,\eps) = \mathcal{X}_\eps(y)$ at a fixed value of $\eps$. Then we define the Lie transform of a function $f(x)$ as
\begin{equation}\label{eq:LieTrans}
\mathcal{L}_S(f) \defeq f \circ \mathcal{X}.
\end{equation}

Once we have properly chosen the generating function $S$, the normal-form Hamiltonian is then the Lie transform of the original Hamiltonian:
$$
H^* (y) =\mathcal{L}_S(H_*)(y) = \left(H_* \circ \mathcal{X}\right)(y) = H_*(\mathcal{X}(y,\eps)).
$$
The question remains how to choose the generating function, which will become clear given the algorithm for computing the Lie transform, known as the Deprit triangular array. This provides a structure that enables us to track the numerous terms in the calculation and to construct an algorithm suitable for implementation. We refer the reader to the references above for details omitted below. The array takes the form
$$
\begin{array}{ccccccccc}
\left(H_0 \equiv\right) & H_0^0 & & & & & & &\left(\equiv H^0\right) \\
& \downarrow & \searrow & & & & & \\
\left(H_1 \equiv\right) & H_1^0 & \rightarrow & H_0^1 & & & & &\left(\equiv H^1\right) \\
& \downarrow & \searrow & \downarrow & \searrow & & & \\
\left(H_2 \equiv\right) & H_2^0 & \rightarrow & H_1^1 & \rightarrow & H_0^2 & & &\left(\equiv H^2\right)\\
& \downarrow & \searrow & \downarrow & \searrow & \downarrow & \searrow & \\
\left(H_3 \equiv\right) & H_3^0 & \rightarrow & H_2^1 & \rightarrow & H_1^2 & \rightarrow & H_0^3 &\left(\equiv H^3\right)\\
& \vdots & \ddots & \vdots & \ddots & \vdots & \ddots & \vdots
\end{array}
$$
A relationship defined by an arrow like $a \rightarrow b$ indicates the direction of dependence, i.e., that constructing $b$ depends on knowledge of $a$. Each intermediate term $H_j^i$,  is defined as a sum of Poisson brackets of terms in the column to its left and above it in the array with appropriate terms in the generating function series, 
\begin{equation} \label{eq:Lie_Poisson_S1um}
H_j^{i} \defeq H_{j+1}^{i-1}+\sum_{k=0}^j\binom{j}{k}\left\{H_{j-k}^{i-1}, S_{k+1}\right\}.
\end{equation}

To compute $H^j=H_0^j$ for $j\ge 1$, we begin with $H_j=H_j^0$ and work from left to right. At the beginning of this step, the expansion of $S$ up to $S_{j-1}$ are known, so that computing all the intermediate terms involves explicit computations in terms of previously-computed terms.

Computing the rightmost term in the row $H_0^j$ involves the yet-to-be-determined expression $S_j$. Our goal is then to choose $S_j$ in order to make $H^j$ as simple as possible. After some computations (not shown), it can be combined into an equation relating $H^j$ and $H_j$
\begin{equation}\label{eq:HtoH}
H^j =H_0^j= \PB{H_0}{S_j} + H_j + \mathcal{B}_j = \ad_{H_0}S_j + H_j + \mathcal{B}_j,
\end{equation}
where $\mathcal{B}_j$ terms can be considered ``byproducts'' produced by the algorithm up to this point.

Our goal is to choose $S_j$ in a form that simplifies $H^j$ as much as possible. The simplest possible form would be $H^j=0$, which would require
\begin{equation}\label{eq:homological_complete}
\ad_{H_0} S_j = - H_j - \mathcal{B}_j.
\end{equation}
This would require the right-hand side of this equation to be in the range of the operator $\ad_{H_0}$, which may or may not hold.  At each order of the perturbation expansion, we will identify a finite-dimensional vector space $\mathbb{V}$ such that $\ad{H_0}:\mathbb{V}\to\mathbb{V}$. Once we more precisely define $\mathbb{V}$, then we may appeal to the Fredholm alternative to decompose
\[
\mathbb{V} = \Ran{\ad_{H_0}} \oplus \Null{\ad_{H_0}},
\]
where we have used the fact that $\ad_{H_0}=\ad_{H_0}^T$ when $A = J\nabla H_0$ is diagonalizable. The definition of the transpose operator depends on the inner product defined below in Eq.~\eqref{eq:ipHN}.

We must therefore consider the space on which this operator acts. First, we introduce the space
\begin{equation}\label{eq:Hn}
\mathbb{H}_n(\bW,\bar{\bW}; \bZ,\bar{\bZ})
:= \left\{ 
   \sum_{\abs{\balpha}+\abs{\bbeta}= n}
   a_{\balpha,\bbeta}(\bZ,\bar{\bZ})\,
   \bW^{\balpha} \bar{\bW}^{\bbeta}
   \;{\Bigg|}\;
   a_{\balpha,\bbeta}\in C^\omega(\bZ,\bar{\bZ})
   \right\}.
\end{equation}
of polynomials of homogeneous degree $n$ in the $(\bW,\bar{\bW})$ coordinates with coefficients that are real-analytic functions of the $(\bZ,\bar{\bZ})$ coordinates. 
Here $\balpha$ and $\bbeta$ are multi-indices, i.e., $N$-tuples of non-negative integers. In this section, we use the multi-index definitions
\begin{equation}\label{eq:multi-index}
\begin{gathered}
\balpha=\left(\alpha_1,\ldots,\alpha_N\right) \text{ for }
\alpha_j \in \mathbb{Z}_{\ge 0}; \\
\abs{\balpha}=\sum_{j=1}^N \alpha_j; \quad
\bW^{\balpha}= \prod_{j=1}^N W_j^{\alpha_j}; \quad
\partial_{\bW}^{\balpha}=\partial_{W_1}^{\alpha_1}\cdots\partial_{W_N}^{\alpha_N}.
\end{gathered}
\end{equation}
The natural inner product on this space is 
\begin{equation}\label{eq:ipHN}
    \ip{F}{G}_{\mathbb{H}^N} \defeq \evalat{F(\partial_\bW,\partial_{\bar{\bW}})G(\bW,\bar{\bW})}{\bW=\bar{\bW}=\mathbf{0}}.
\end{equation}
The inner product of a pair of monomials is 
$\ip{\bW^{\balpha}\bar{\bW}^{\bbeta}}{\bW^{\balpha'}\bar{\bW}^{\bbeta'}}_{\mathbb{H}^N}
= \delta_{\balpha \balpha'} \delta_{\bbeta \bbeta'} \balpha! \bbeta!$, where the multi-index factorial is $\balpha!=\prod_{j=1}^N \alpha_j!$.

Then we introduce the space of polynomials of degree at most $n$ in the $(\bW,\bar{\bW})$ coordinates with coefficients that are real-analytic functions of the $(\bZ,\bar{\bZ})$ coordinates and
\begin{align*}
\mathbb{P}_n(\bW,\bar{\bW}; \bZ,\bar{\bZ})
&:= \left\{ 
   \sum_{\abs{\balpha}+\abs{\bbeta}\le n}
   a_{\balpha,\bbeta}(\bZ,\bar{\bZ})\,
   \bW^{\balpha} \bar{\bW}^{\bbeta}
   \;\Bigg|\;
   a_{\balpha,\bbeta}\in C^\omega(\bZ,\bar{\bZ})
   \right\} 
\\ 
&= \bigoplus_{k=0}^n 
\mathbb{H}_k(\bW,\bar{\bW}; \bZ,\bar{\bZ}).
\end{align*}
The inner product definition~\eqref{eq:ipHN} can be extended to $\mathbb{P}_n$ and vanishes identically when applied to a pair of monomials of different orders.

The series for $H_*$ is composed of terms
\[
H_j \in \mathbb{H}_{j-1}(\bW,\bar{\bW}; \bZ,\bar{\bZ}).
\]

Then the Poisson bracket defined using the symplectic form~\eqref{eq:OmegaZW} satisfies
\[
\PB{\cdot}{\cdot}:\mathbb{P}_m(\bW,\bar{\bW}; \bZ,\bar{\bZ})\times \mathbb{P}_n(\bW,\bar{\bW}; \bZ,\bar{\bZ}) \to \mathbb{P}_{m+n-2}(\bW,\bar{\bW}; \bZ,\bar{\bZ}).
\]
More importantly, because $H_0$ is independent of $(\bZ,\bar{\bZ})$ and quadratic in $(\bW,\bar{\bW})$,
\[
\ad_{H_0}: \mathbb{H}_n(\bW,\bar{\bW}; \bZ,\bar{\bZ}) \to \mathbb{H}_n(\bW,\bar{\bW}; \bZ,\bar{\bZ}),
\]
and
\[
\ad_{H_0}: \mathbb{P}_n(\bW,\bar{\bW}; \bZ,\bar{\bZ}) \to \mathbb{P}_n(\bW,\bar{\bW}; \bZ,\bar{\bZ}).
\]
These two statements can be verified via direct calculation. They allow us to define
\[
\ad_{H_0}^{(n)} \defeq 
\evalat{\ad_{H_0}}{\mathbb{P}_n}.
\]

From these statements, we may verify by mathematical induction that the byproduct terms that appear in Eq.~\eqref{eq:HtoH} satisfy $\mathcal{B}_j \in \mathbb{P}_{j-1}(\bW,\bar{\bW}; \bZ,\bar{\bZ})$. Thus, at each stage of the perturbation expansion, we decompose
\[
H_j + \mathcal{B}_j = \Res_j + \Non_j, \text{ where } 
\Res_j \in \Null\parentheses{\ad_{H_0}^{(j-1)}} \text{ and }
\Non_j \in \Ran\parentheses{\ad_{H_0}^{(j-1)}}.
\]
The terms in $\Non_j$ are called \emph{nonresonant}. By an appropriate choice of terms that form $S_j\in \mathbb{P}_{j-1}(\bW,\bar{\bW}; \bZ,\bar{\bZ})$, we may ensure that no terms of this type appear in $H^j$. The terms that form $\Res_j$ are called \emph{resonant}. By contrast, there does not exist a generating function $S_j$ that can remove such terms from $H^j$.
Since $\mathbb{P}_{j-1}$ is finite-dimensional, so are these subspaces, implying that the terms in Eq.~\eqref{eq:homological_complete} can be interpreted as finite-dimensional vectors and matrices.

We now apply the above general discussion and describe how to arrive at the slow manifold normal form presented in \Cref{sec:preview}. We wrote the Hamiltonian in terms of the complex coordinates given in Eq.~\eqref{eq:defZW} because the operator $\ad_{H_0}$ acts diagonally on monomials in $(\bW,\bar{\bW})$. In particular,
$$
\ad_{H_0}^{(j-1)} {\bW}^{\balpha}\bar{\bW}^{\bbeta} = \frac{2\rmi \bq\cdot(\balpha-\bbeta)}{\eps} {\bW}^{\balpha}{\bar{\bW}}^{\bbeta}.
$$
Clearly, then, the condition that a monomial be resonant is 
\begin{equation} \label{eq:resonance_condition}
{\bW}^{\balpha}{\bar{\bW}}^{\bbeta} \in \Null(\ad_{H_0}) \Leftrightarrow \bq\cdot(\balpha-\bbeta)=0.
\end{equation}
Therefore, we can deal with the nonresonant terms as follows. Write
\[
\Non_j = \sum_{(\balpha,\bbeta)\in \cR_{j-1}} c_{\balpha,\bbeta} {\bW}^{\balpha} {\bar\bW}^{\bbeta}
\text{ and }
S_j = \sum_{(\balpha,\bbeta)\in \cR_{j-1}} a_{\balpha,\bbeta} {\bW}^{\balpha} {\bar\bW}^{\bbeta},
\]
where
\[
\cR_{j-1} \defeq \left\{ \left. (\balpha,\bbeta) \right\rvert \bW^{\balpha} {\bar{\bW}}^{\bbeta} \in \Ran\left(\ad_{H_0}^{(j-1)} \right)\right\},
\]
then
\begin{equation} \label{eq:homological_solution}
a_{\balpha,\bbeta} = \frac{\eps c_{\balpha,\bbeta}}{2\rmi \bq\cdot(\balpha-\bbeta)}.
\end{equation}

The terms satisfying the resonance condition~\eqref{eq:resonance_condition} cannot be removed by a change of variables and make up the normal form Hamiltonian. 
Any $\bW$-independent function $F(\bZ)$ is resonant, since it lies in the adjoint null-space of $\ad_{H_0}$.

\subsection{Deriving the changes of variables}

The above procedure generates a formal power series for a transformed Hamiltonian from which all resonant terms have been removed. Knowledge of this transformed Hamiltonian suffices for many applications. Here, however, our goal is to derive a series for a slow manifold on which oscillations in the direction of the fast degree of freedom are suppressed. The Lie transform~\eqref{eq:LieTrans} provides a way to transform any function once the generating function $S$ is known. In particular, if $f(\cZ,\cW)$ is any component of the identity operator, then we may use Eq.~\eqref{eq:Lie_Poisson_S1um} to evaluate its image under the Lie transform, which tells us how the coordinates transform under this mapping. Collins et al.\ discuss this in detail~\cite{Collins2008}.

\subsection{Application to a system of two vortices}

We restrict our calculation to the two cases discussed above: the corotating case of two identical vortices and the counter-rotating case of two vortices of equal strength but opposite sign.

In either case, and in fact for any system of point vortices, the term $H_1 = E_0(\bZ;\bq)$, as defined in Eq.~\eqref{eq:HnZW}, is $\bW$-independent, and thus resonant, so that 
\[
S_1(\bZ,\bW) = 0 \text{ and } H^{1}(\bZ;\bq) = H_1(\bZ;\bq).
\]

Collins gives explicit formulas for the change-of-variables defined by the Lie transform in terms of iterated Poisson brackets. These simplify considerably when $S_1$ vanishes. Let
\[
y = \left(
\cZ_1, \cZ_2, \cW_1, \cW_2, \bar{\cZ}_1, \bar{\cZ}_2, \bar{\cW}_1, \bar{\cW}_2
\right)
\]
be the vector of ``new'' variables
and $\mathcal{X}_k(y)$ be the $k$th component of the map $\mathcal{X}(y,\eps)$ from new variables to the old. Then we may show that
\[
\mathcal{X}_k(y) = y_k + \frac{1}{2}\PB{y_k}{S_2} + \frac{1}{6}\PB{y_k}{S_3} + \ldots
\]
truncated to the same order as the approximations~\eqref{eq:H3equal} and~\eqref{eq:H3opposite}. Below, we present the expansion up to the first correction term.

\subsubsection*{The corotating case}

In the corotating case, $q_1=q_2=1$, the resonance condition~\eqref{eq:resonance_condition} becomes
$$
\alpha_1 + \alpha_2 - \beta_1 - \beta_2 = 0.
$$
If the monomial is in $\mathbb{H}_j$, then
$$
\abs{\balpha}+\abs{\bbeta} = \alpha_1 + \alpha_2 + \beta_1 + \beta_2 = j.
$$
Combining these two conditions implies that
$$
2(\alpha_1 + \alpha_2) = j,
$$
which implies that the space of resonant monomials for any odd degree $j$ is empty. This, in turn, implies that the normal form for the slow manifold contains only monomials of even total degree. For $j=2$, we find that the system has four solutions,
$$
\parentheses {\balpha , \bbeta } \in 
\braces{ 
\parentheses{ \binom{1}{0}, \binom{1}{0}},
\parentheses{ \binom{0}{1}, \binom{0}{1}},
\parentheses{ \binom{1}{0}, \binom{0}{1}},
\parentheses{ \binom{0}{1}, \binom{1}{0}}
 },
$$
so that
$$
\Res_2 = \Span\{W_1 \bar{W}_1, W_2 \bar{W_2}, W_1 \bar{W}_2, \bar{W}_1 W_2 \}
$$
in agreement with the first nonzero term in the normal form Hamiltonian expansion displayed in Eq.~\eqref{eq:H3equal}. All resonant terms of higher degree are formed from products of these four monomials.

The change of variables takes the form
\begin{equation}\label{eq:ZWfromcZcW}
\begin{split}
Z_1 &=\cZ_1 + \frac{\rmi \eps}{2} \left(-\tfrac{\cW_1}{\left(\cZ_1 \bar{\cZ}_1-1\right)^2}+\bar{\cW}_1 \left(-\tfrac{\cZ_1^2}{\left(\cZ_1 \bar{\cZ}_1-1\right)^2}-\tfrac{\cZ_2^2}{\left(\cZ_2
   \bar{\cZ}_1-1\right)^2}+\tfrac{1}{\left(\bar{\cZ}_1-\bar{\cZ}_2\right)^2}\right)-\tfrac{\cW_2}{\left(\cZ_2
   \bar{\cZ}_1-1\right)^2}-\tfrac{\bar{\cW}_2}{\left(\bar{\cZ}_1-\bar{\cZ}_2\right)^2} \right)\\
   &\quad+O(\eps \abs{\cW}^2)+ O(\eps^2 \abs{\cW});\\
Z_2 &= \cZ_2 +\frac{\rmi \eps}{2}\left(-\tfrac{\cW_1}{\left(\cZ_1 \bar{\cZ}_2-1\right)^2}+\bar{\cW}_2 \left(-\tfrac{\cZ_1^2}{\left(\cZ_1
   \bar{\cZ}_2-1\right)^2}+\tfrac{1}{\left(\bar{\cZ}_1-\bar{\cZ}_2\right)^2}-\tfrac{\cZ_2^2}{\left(\cZ_2
   \bar{\cZ}_2-1\right)^2}\right)-\tfrac{\bar{\cW}_1}{\left(\bar{\cZ}_1-\bar{\cZ}_2\right)^2}-\tfrac{\cW_2}{\left(\cZ_2 \bar{\cZ}_2-1\right)^2}\right) \\
   &\quad+O(\eps \abs{\cW}^2) + O(\eps^2 \abs{\cW});\\
W_1&= \cW_1 +\tfrac{\rmi \eps}{2} \tfrac{-\cZ_1 \cZ_2 \bar{\cZ}_1^2+2 \cZ_1 \cZ_2 \bar{\cZ}_2 \bar{\cZ}_1-\cZ_1 \bar{\cZ}_2-\cZ_2 \bar{\cZ}_2+1}{\left(\cZ_1 \bar{\cZ}_1-1\right) \left(\cZ_2 \bar{\cZ}_1-1\right)  \left(\bar{\cZ}_1-\bar{\cZ}_2\right)} +O(\eps \abs{\cW}) + O(\eps^2);\\
W_2 &= \cW_2 +\tfrac{\rmi \eps}{2} \tfrac{\cZ_1 \cZ_2 \bar{\cZ}_2^2-2 \cZ_1 \cZ_2 \bar{\cZ}_1 \bar{\cZ}_2+\cZ_1 \bar{\cZ}_1+\cZ_2 \bar{\cZ}_1-1}{\left(\bar{\cZ}_1-\bar{\cZ}_2\right) \left(\cZ_1 \bar{\cZ}_2-1\right) \left(\cZ_2
   \bar{\cZ}_2-1\right)}  + O(\eps \abs{\cW}) + O(\eps^2);
\end{split}
\end{equation}
The $O(\eps^2)$ terms in the equations for $W_1$ and $W_2$ are $\cW$-independent.

The next term in the normal form Hamiltonian expansion~\eqref{eq:H3equal} shows us that the hyperplane $\cW=0$ is invariant under the normal form dynamics truncated to this order. Since all higher-order terms take the same form, this hyperplane is invariant under the normal form dynamics truncated to any order. Therefore, the slow manifold is given by
\[
(\bZ,\bW) = \mathcal{X}(\cZ,\cW,\eps).
\]
Plugging $\cW=0$ into the first two equations of~\eqref{eq:ZWfromcZcW} gives
\begin{equation} \label{eq:cZtoZ}
Z_1 = \cZ_1; \; Z_2 = \cZ_2,
\end{equation}
so we may write $W_1$ and $W_2$ directly in terms of $Z_1$ and $Z_2$ as shown in~Eq.~\eqref{eq:slowWco}.

\subsubsection*{The counter-rotating case}
If $q_1=-q_2=1$, the resonance condition becomes
$$
\alpha_1 - \alpha_2 - \beta_1 + \beta_2 = 0.
$$
All resonant monomials are again of even degree $j$ and must also satisfy
$$
2(\alpha_1 + \beta_2) = j.
$$
For $j=2$, we find that
$$
\Res_2 = \Span\{W_1 \bar{W}_1, W_2 \bar{W_2}, W_1 W_2, \bar{W}_1 \bar{W}_2 \}
$$
in agreement with the first nonzero term in the normal expansion for $H^*$, Eq.~\eqref{eq:H3opposite}. Again, all resonant terms of higher degree consist of products of these four monomials. The leading order terms in the expansion of the slow manifold are shown in Eq.~\eqref{eq:slowWcounter}.

\subsection{Numerical results on $\cS_1$}
\Cref{fig:t-x01_on_K} depicts dynamics of the $x$ coordinate of the vortex with topological charge $q_1=-1$ in a system of two counter-rotating vortices. It shows two time series that are largely similar but display key differences. The first is the motion of a vortex in a system of two massless vortices, and the second shows the motion of one in the massive system with initial conditions given on the kinematic manifold $\cK$. The two time series gradually go out of phase over the course of the simulation. The first also undergoes an additional oscillation of small amplitude and high frequency; this is not visible on the scales shown, but we plot it in~\Cref{fig:t-x01_on_S1_closeup}, which we discuss below.

We perform an additional numerical simulation of the massive system~\eqref{eq:Hamilton}, this time with initial data constructed to lie on $\cS_1$, the first nonlinear approximation to the slow manifold $\cS$. To find these, we begin with the initial values in Eq.~\eqref{eq:IC_on_K}, and use these to construct the $\bR_j$ and $\bP_j$ coordinates using Eq.~\eqref{eq:rpToRP}, which give $\bP_j\equiv0$. We complexify $\bR_j$ to compute $\bZ$, and finally compute $\bW$ using Eq.~\eqref{eq:slowWcounter}. Inverting this sequence of transformations gives us an initial condition in the original coordinate system:
\begin{equation}
  \label{eq:IC_on_S1}
  \begin{array}{c}
    \br_{1}(0) = (x_{1}(0), y_{1}(0)) = (0.602543, 0.200491),\quad \br_{2}(0) = (x_{2}(0), y_{2}(0)) = (-0.300376, -0.401214), \medskip\\
    \bp_{1}(0) = (-0.199509, 0.597457),\quad \bp_{2}(0) =  (-0.398786, 0.299624).
  \end{array}
\end{equation}
By construction, these initial conditions do not satisfy Eq.~\eqref{eq:kinematic_constraint}, so do not lie on $\cK$.

We plot the results in~\Cref{fig:IC_on_S1}. In the view shown in~\Cref{fig:t-x01_on_S1}, the two simulations of the massive-vortex system are indistinguishable. In~\Cref{fig:t-x01_on_S1_closeup}, which shows the solution over a much shorter time interval, we can see that the solution that starts on $\cK$ oscillates about the solution that starts on $\cS_1$. 
\begin{figure}[htbp]
  \centering
  \begin{subcaptionblock}[b]{0.475\textwidth}
    \centering
    \includegraphics[width=\textwidth]{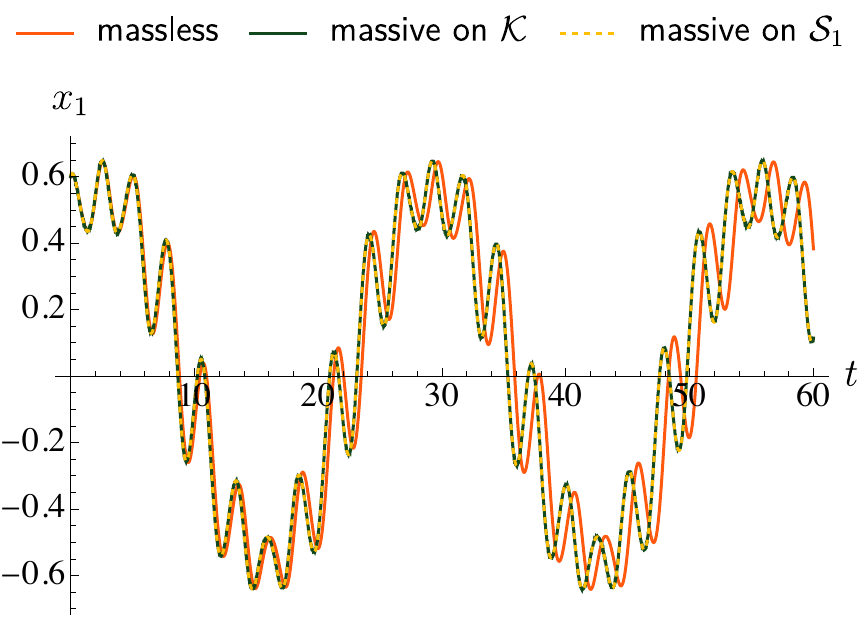}
    \caption{$x$-coordinate of vortex~1}
    \label{fig:t-x01_on_S1}
  \end{subcaptionblock}
  \hfill
  \begin{subcaptionblock}[b]{0.475\textwidth}
    \centering
    \includegraphics[width=\textwidth]{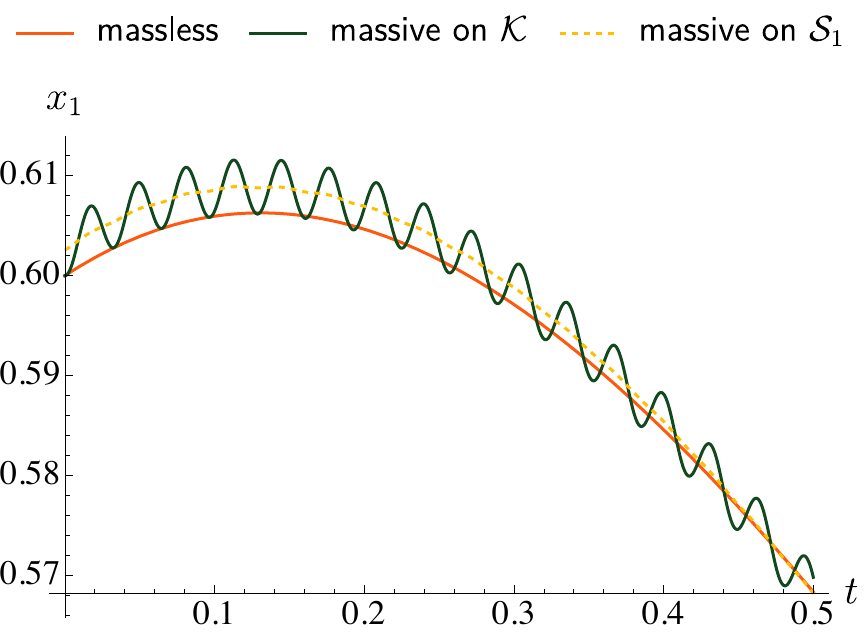}
    \caption{$x$-coordinate of vortex~1---shorter time}
    \label{fig:t-x01_on_S1_closeup}
  \end{subcaptionblock}
  \caption{Massless and massive time series of vortex~1 with parameters and initial conditions from~\eqref{eq:IC_on_K} satisfying $(\br(0),\bp(0)) \in \cK$ and a third with initial condition from~\eqref{eq:IC_on_S1}.}
  \label{fig:IC_on_S1}
\end{figure}

Choosing the initial condition on $\cS_1$ reduces but does not eliminate the fast oscillations. \Cref{fig:t-V1}, shows two plots of $V_1(t) = (\xi_1(t) - x_1(t))/2$ (the second component of $\bP_1 = \frac{1}{2} \left( \bp_1 - q_1 J \br_1 \right)$ where $\bp_1 = (\xi_1,\eta_1)$), the first with initial condition~\eqref{eq:IC_on_K} on $\cK$ and the second with initial condition~\eqref{eq:IC_on_S1} on $\cS_1$, both over a short timescale.
On the kinematic subspace $V_1\equiv0$, so the solution that starts on $\cK$ has $V_1(0)=0$, but the fast oscillation is large relative to the trajectories whose initial condition lies on $\cS_1$. By computing more terms in the slow manifold expansion, we can reduce the amplitude of the fast oscillations further.

\begin{figure}[htbp]
  \centering
  \begin{subcaptionblock}[b]{0.475\textwidth}
    \centering
    \includegraphics[width=\textwidth]{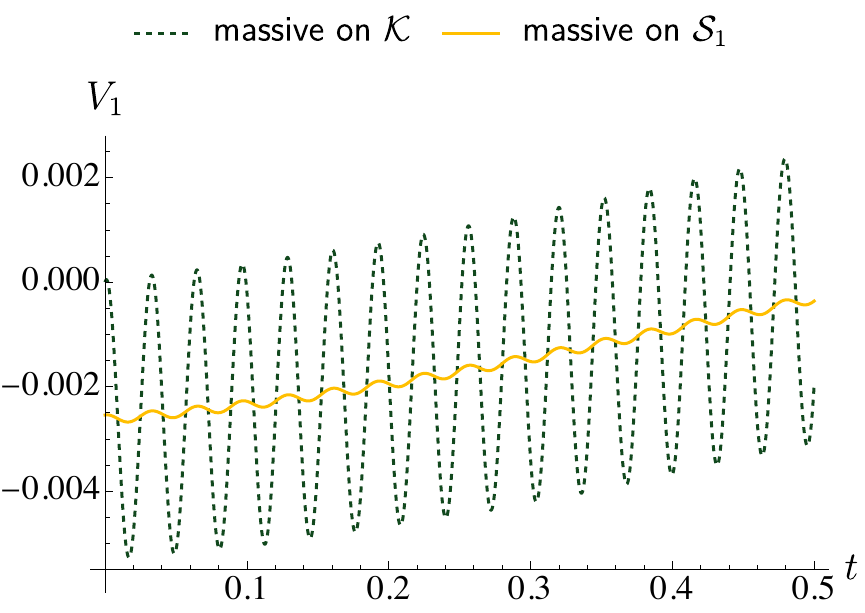}
    \caption{$V$-coordinate of vortex~1}
    \label{fig:t-V1}
  \end{subcaptionblock}
  \hfill
  \begin{subcaptionblock}[b]{0.475\textwidth}
    \centering
    \includegraphics[width=\textwidth]{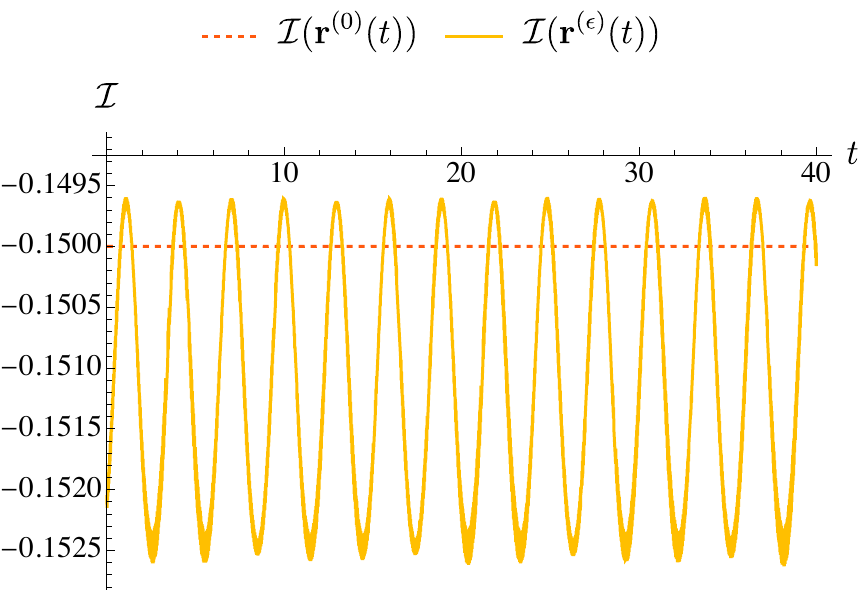}
    \caption{Angular impulse.}
    \label{fig:AngImpulse_on_S1}
  \end{subcaptionblock}
  \caption{(a) The $V_1$ component of the motion transverse to $\cK$ for the initial condition~\eqref{eq:IC_on_K} on $\cK$ and~Eq.~\eqref{eq:IC_on_S1} on $\cS_1$. (b) Time evolution of  angular impulse $\mathcal{I}$ (see~\eqref{eq:AngImpulse}) along  massless solution $\br^{(0)}(t)$ and massive solution $\br^{(\epsilon)}(t)$ with parameters and initial conditions from~\eqref{eq:IC_on_S1} on $\cS_1$. }
  \label{fig:IC_on_S1_b}
\end{figure}

\Cref{fig:AngImpulse_on_S1} shows that the fast oscillations in the value of the massless angular impulse seen in~\Cref{fig:AngImpulse_on_K} are significantly reduced when the initial condition is chosen on $\cS_1$. Slower oscillations remain.

\section{Conclusions and Outlook}
\label{sec:conclusions}
\subsection{Conclusions}
In this work, we analyzed the dynamics of massive point vortices in immiscible Bose--Einstein condensates in the small core mass limit $\eps\to 0$.
We showed that the well-known Kirchhoff equations governing massless vortex dynamics emerge as the leading-order approximation to the massive equations.

Our key findings are:
\begin{enumerate}[(i)]
  \item The identification of a subspace $\cK$ in the phase space of the massive dynamics such that the massive dynamics starting $O(\eps)$-close to $\cK$ stays $O(\eps)$-close to the corresponding massless dynamics for short times;
  \item A rigorous proof of this near-massless behavior of the massive dynamics;
  \item Computation of the leading terms in the approximations of the slow manifold $\cS$ and a near-identity change of variables into a normal form that decouples the fast inertial oscillations from a slow motion equivalent to the massless dynamics.
  \item Numerical computation showing the effectiveness of the normal form dynamics in a neighborhood of $\cS$.
\end{enumerate}

From a physical perspective, our results are particularly relevant to many experimental systems because vortices acquire an effective inertial mass $\eps$ due to their tendency to trap particles (tracers, impurities, quantum and thermal excitations, etc).
As a result, the conventional massless point-vortex model overlooks an essential feature of real systems.
By systematically analyzing the asymptotic behaviors as $\eps \to 0$, our work provides a more realistic and physically meaningful description of vortex motion in quantum fluids. We expect it to give a better understanding of various recent and ongoing experiments~\cite{Kwon2021, Reeves2022, Hernandez2024, Neely2024}. 

\subsection{Outlook}
This study suggests several directions for future research.
Thus far, we have discussed only the similarity between the motions of massive and massless vortices, despite several key differences between them. For example, massive vortices in a hard-walled trap can collide with the boundary in finite time, an effect not possible for massless vortices~\cite{Caldara2023}. More relevant to the current study, the dynamics on the slow manifold $\cS$ are (formally) identical to the massless dynamics, which are completely integrable for a system of $N=2$ vortices. However, the introduction of mass increases the dimension of the phase space without increasing the number of independent conserved quantities, breaking the system's integrability. This should allow for a wider set of solution behaviors, including chaotic dynamics, especially for solutions on longer time intervals. To leading order, these behaviors should be captured by the next-order terms in the Hamiltonian expansions; see Equations~\eqref{eq:H3equal} and~\eqref{eq:H3opposite}, which couple the dynamics of the $\cZ$ coordinates in the slow manifold to the $\cW$ coordinates transverse to it. Understanding this will require a more detailed study of the massless dynamics before the effects of mass are introduced.

\Cref{fig:IC_on_S1} shows that solutions starting on $\cS_1$, the first nontrivial approximation to the slow manifold $\cS$, remain near the slow manifold for the time scale simulated. However, the normal-form calculation of~\Cref{sec:slow_manifold} is insufficient to prove the existence of a slow manifold, since the series produced by the algorithm usually has zero radius of convergence. This is a longstanding concern in dynamical systems, which we briefly summarize. The main ideas were developed in the study of two model problems, the first arising in the dynamics of atmospheric waves, and the second in the motion of electrons in the presence of curved magnetic fields. In both of these systems, as for point vortices with small mass, the motion transverse to the slow manifold is oscillatory. The other main case, when the slow manifold is saddle-like, is better understood; the existence and stability of such slow manifolds goes back to work of Fenichel~\cite{Fenichel:1979}.

Lorenz and Krishnamurthy proposed a two-degree-of-freedom model of the atmosphere with one degree of freedom representing (slow) Rossby waves and a second representing (fast) gravity waves, and showed numerically that fast oscillations invariably arose, despite careful choice of initial conditions~\cite{Lorenz:1987}. Camassa used a Melnikov integral argument to identify the mechanism of energy transfer from slow degrees of freedom to fast~\cite{Camassa.1995}. He noted that his argument suffices to construct a Smale horseshoe, thereby proving the existence of chaotic dynamics. Subsequently, Vanneste derived leading-order asymptotics for terms in the tail of the slow manifold expansion and, using Borel summation, found beyond-all-orders terms in the asymptotics, showing that the slow manifold is only piecewise analytic; when trajectories cross a Stokes surface in the manifold, they excite exponentially small oscillations transverse to $\cS$~\cite{Vanneste.2008}.

Another promising avenue is a more detailed investigation of the so-called \emph{plasma-orbit theory}, which describes vortex trajectories through an analogy with the motion of charged particles in an external electromagnetic field; see \cite{Fischer1999, Caldara2023, Poli2024}. Questions about slow manifolds have arisen frequently in the study of plasma-orbit theory, where it is often referred to as \emph{guiding center motion}~\cite{Cary:2009, Littlejohn:1979, Littlejohn:1981, Littlejohn:1983}. These references primarily concern the derivation of the reduced Hamiltonian equations. Burby and Hirvijoki recently proved the existence and stability of a slow manifold under additional assumptions on the reduced Hamiltonian and with a single fast degree of freedom~\cite{Burby:2021}, as opposed to the present case, where each massive vortex contributes a fast degree of freedom.

In \Cref{sec:slow_manifold}, we find that the higher-order terms in the Hamiltonian and the form of the slow manifold both depend on the choice of signs of the topological charges $q_j$. This is because the form of the resonance depends on the set of exponents that satisfy the algebraic condition~\eqref{eq:resonance_condition}.
For larger systems of vortices, or for systems including multiply-quantized vortices with larger integer-valued topological charges, the sets of resonant monomials will be different, and different types of terms will appear. The effective mass of each vortex depends on the number of particles of the $b$ component it traps. In the present analysis, we have assumed that the density of $b$ is sufficiently large, and the trapping dynamics sufficiently uniform, that each vortex captures essentially the same number of $b$-particles, resulting in identical effective masses for all vortices. This assumption is natural in regimes where the ``doping'' component is not too dilute and where the vortex cores represent energetically similar trapping sites. However, in a two-component Bose--Einstein condensate, this need not be the case: when the $b$ component is very dilute, different vortices may trap different numbers of particles, so that the vortex effective masses become non-uniform~\cite{Bellettini2024PRA} (and may effectively take non-integer values in units of the particle mass). In that scenario, there may exist no positive-integer-valued solutions to the resonance condition. Formally, this would imply that all terms appearing on the right-hand side of Eq.~\eqref{eq:homological_complete} can be solved; nevertheless, there will exist exponents $\balpha$ and $\bbeta$ for which the denominator of Eq.~\eqref{eq:homological_solution} becomes arbitrarily small. Such \emph{near resonances} produce \emph{small-divisor problems}, introducing additional technical difficulties in applying the perturbation method~\cite{SaVeMu2007}.

More broadly, the issue of whether vortex masses are equal or heterogeneous depends on the physical realization of the massive point-vortex model. While two-component condensates can naturally lead to unequal core masses due to statistical fluctuations in the amount of trapped foreign material, there exist other systems where the equal-mass assumption is more justified. For instance, in fermionic superfluids (and similarly in superconductors), vortex cores can be spontaneously filled by an intrinsic ``normal component'' generated by pair-breaking mechanisms~\cite{Caroli1964, Duan1992, Simula2018, Richaud2025Fermi, Levrouw2025}. In that case, the vortex core mass is not produced by externally introduced particles. Instead, it emerges as an intrinsic microscopic property of the vortex itself, making it far more reasonable to assume that all vortices carry the same effective core mass. Finally, in superfluid liquid helium, vortices can be doped with tracer particles~\cite{Bewley2006, Griffin2020, Peretti2023, Tang2023}, and the doping process can, in principle, be engineered to produce vortices with nearly identical tracer loads, again supporting the equal-mass approximation.

Expected applications of this improved and more accurate theoretical framework include the instabilities of many-vortex systems~\cite{Hernandez2024, Caldara2024, Simjanovski2025} and of multiply quantized vortices~\cite{Cuomo2024}, the formation~\cite{Tsuzuki2023} and sudden rearrangements of vortex crystals~\cite{Campbell1979, Coddington2004, Poli2023}, the normal modes of vortex necklaces~\cite{Chaika2023, Caldara2023, Caldara2024, Hernandez2024, Banger2025}, as well as the interaction of vortices in inhomogeneous media~\cite{Shinn2025} and curved substrates~\cite{Caracanhas2022}, or their interplay with other nonlinear excitations~\cite{Skogvoll2023, Edmonds2023}.   

\section*{Acknowledgments}
T.O. was supported by NSF grant DMS-2006736.
A.R. received funding from the European Union’s
Horizon research and innovation programme under the Marie Skłodowska-Curie grant agreement \textit{Vortexons} (no.~101062887), and from ``La Caixa'' Foundation  (ID 100010434) through the Junior Leader Fellowship LCF/BQ/PR25/12110014. A.R. further acknowledges support by the Spanish Ministerio de Ciencia e Innovación (MCIN/AEI/10.13039/501100011033, grant PID2023-147469NB-C21), by the Generalitat de Catalunya (grant 2021 SGR 01411), and by the ICREA {\it Academia}\ program.
A.R. is grateful to Pietro Massignan for helpful discussions.
R.G. was supported by NSF grant DMS-220616.

\appendix
\section*{Appendix: Proof of~\eqref{eq:integral_estimate}}
This proof essentially reproduces the proof of \cite[Lemma~2.8.2]{SaVeMu2007}.
We include it here for completeness.
Set
\begin{equation*}
  \mathbf{E}(\bw,\tau) \defeq \mathbf{F}(\bw,\tau) - \overline{\mathbf{F}}(\bw).
\end{equation*}
Then $\tau \mapsto \mathbf{E}(\bw,\tau)$ is $2\pi$-periodic with zero average.
Now, setting $\tau_{i} \defeq 2\pi i$ for $i \in \Z$, for every $T \in [0,t_{1}/\eps]$, there exists $m \in \N$ such that $\tau_{m} \le T < \tau_{m+1}$.
Therefore,
\begin{equation*}
  \int_{0}^{T} \mathbf{E}(\overline{\bw}(\tau), \tau)\,d\tau
  = \sum_{i=0}^{m-1} \int_{\tau_{i}}^{\tau_{i+1}} \mathbf{E}(\overline{\bw}(\tau), \tau)\,d\tau
  +  \int_{\tau_{m}}^{T} \mathbf{E}(\overline{\bw}(\tau), \tau)\,d\tau,
\end{equation*}

We have the following estimate for the first $m$ integrals:
For every $i \in \{0, \dots, m-1\}$, we have
\begin{align*}
  \norm{ \int_{\tau_{i}}^{\tau_{i+1}} \mathbf{E}(\overline{\bw}(\tau), \tau)\,d\tau }
  &= \norm{ \int_{\tau_{i}}^{\tau_{i+1}} \parentheses{ \mathbf{E}(\overline{\bw}(\tau), \tau) - \mathbf{E}(\overline{\bw}(\tau_{i}), \tau) } d\tau } \\
  &\le \int_{\tau_{i}}^{\tau_{i+1}} \norm{ \mathbf{E}(\overline{\bw}(\tau), \tau) - \mathbf{E}(\overline{\bw}(\tau_{i}), \tau) } d\tau \\
  &\le \lambda_{2} \int_{\tau_{i}}^{\tau_{i+1}} \norm{ \overline{\bw}(\tau)- \overline{\bw}(\tau_{i}) } d\tau \\
  &\le 2\pi\,\lambda_{2}\,c_{0}\,\eps.
\end{align*}
The first line is due to the vanishing average of $\mathbf{E}$; the third line follows because $\bw \mapsto \mathbf{E}(\bw,\tau)$ is Lipschitz with Lipschitz constant $\lambda_{2} > 0$ due to the assumptions; the fourth line follows by integrating~\eqref{eq:massless-fast_time}:
\begin{equation*}
  \norm{ \overline{\bw}(\tau)- \overline{\bw}(\tau_{i}) } \le \eps \int_{\tau_{i}}^{\tau} \norm{ \mathbf{G}(\overline{\bw}(\sigma)) } d\sigma
  \le c_{0}\,\eps
\end{equation*}
with some constant $c_{0} > 0$.

The norm of the last integral can also be bounded by some constant $k_{0} > 0$.
Hence we have
\begin{equation*}
  \norm{ \int_{0}^{T} \mathbf{E}(\overline{\bw}(\tau), \tau)\,d\tau } \le \underbrace{2\pi m}_{\tau_{m}}\,\lambda_{2}\,c_{0}\,\eps + k_{0} 
\end{equation*}
However, since $\tau_{m} \le T \le t_{1}/\eps$, we obtain
\begin{equation*}
  \norm{ \int_{0}^{T} \mathbf{E}(\overline{\bw}(\tau), \tau)\,d\tau } \le \lambda_{2}\,c_{0}\,t_{1} + k_{0} \eqdef c_{1}
  \quad
  \forall T \in [0, t_{1}/\eps].
\end{equation*}

\bibliography{Small-Mass_Point_Vortices}
\bibliographystyle{abbrvnat}

\end{document}